\documentclass[showpacs,amsmath,amssymb,twocolumn,pra,superscriptaddress,floatfix]{revtex4-1}

\usepackage{graphicx} 
\usepackage{amsfonts}
\usepackage{amssymb}
\usepackage{amscd}
\usepackage{amsmath}    
\usepackage{enumerate}
\usepackage{epsfig}
\usepackage{subfigure}
\usepackage{xcolor}
\usepackage{amsthm}
\usepackage{physics}

\newtheorem{theorem}{Theorem}

\begin{document}
\preprint{APS/123-QED}
\title{Quantifying Coherence with Untrusted Devices}

\author{Xingjian Zhang}
\affiliation{Center for Quantum Information, Institute for Interdisciplinary Information Sciences, Tsinghua University, Beijing, 100084 China}

\author{Yunchao Liu}
\affiliation{Center for Quantum Information, Institute for Interdisciplinary Information Sciences, Tsinghua University, Beijing, 100084 China}

\author{Xiao Yuan}
\email{xiao.yuan.ph@gmail.com}
\affiliation{Department of Materials, University of Oxford, Parks Road, Oxford OX1 3PH, United Kingdom}


\begin{abstract}
{Device-independent (DI) tests} allow to witness and quantify the quantum feature of a system, such as entanglement, without trusting the implementation devices. Although DI test is a powerful tool in many quantum information tasks, it generally requires nonlocal settings. Fundamentally, the superposition property of quantum states, quantified by coherence measures, is a distinct feature to distinguish quantum mechanics from classical theories. In literature, witness and quantification of coherence with trusted devices have been well-studied. However, it remains open whether we can witness and quantify single party coherence with untrusted devices, as it is not clear whether the concept of DI tests exists without a nonlocal setting.
In this work, we study DI witness and quantification of coherence with untrusted devices. First, 
we prove a no-go theorem for a fully DI scenario, as well as a semi DI scenario employing a joint measurement with trusted ancillary states.
We then propose a general prepare-and-measure semi DI scheme for witnessing and quantifying the amount of coherence. We show how to quantify the relative entropy and the $l_1$ norm of single party coherence with analytical and numerical methods. As coherence is a fundamental resource for tasks such as quantum random number generation and quantum key distribution, we expect our result may shed light on designing new semi DI quantum cryptographic schemes.
\end{abstract}
\maketitle

\section{Introduction}\label{intro}
As a unique property in quantum information processing, {device-independent (DI) tests} allow the possibility of witnessing quantum properties with only observed statistics without relying on device implementations. The idea of device-independence first appeared in Bell tests~\cite{Bell}, where violations of Bell inequalities certify the existence of entanglement and hence rule out the possibility of any local hidden variable theory.  No assumptions on the implementation devices are made, and for this reason we say that Bell tests are fully DI. {From the observed statistics, we can even  determine certain unknown states (and uncharacterized measurements). This phenomenon is usually referred to as self-testing~\cite{mayers1998quantum,biparty2017self}, and Mayers and Yao first pointed out its great prospect in quantum cryptography.}
The independence of devices thus makes Bell inequalities a useful tool for many quantum information processing tasks such as entanglement witness~\cite{EW,Horodecki}, entanglement quantification~\cite{EQ}, quantum random number generation~\cite{OurQRNG,RMPQRNG}, and quantum key distribution~\cite{DIQKD,AcinDIQKD,NJP09,VidickDIQKD,MillerDIQKD,CaoDIQKD}.

Realizing a faithful violation of Bell inequalities puts very stringent requirements in practice~\cite{LoopholefreeViolation1,LoopholefreeViolation2,LoopholefreeViolation3}. The requirements for fully DI quantum information processing, such as DI quantum key distribution (QKD) and DI quantum random number generation (QRNG), are even higher~\cite{LF-DIQIP1,LF-DIQIP2,LF-DIQIP3,LF-DIQIP4,LF-DIQIP5,DIQRNG18}, as one needs to realize very high fidelity state preparation and high efficiency measurements.
Besides, not all entangled states can violate a Bell inequality~\cite{Werner,LHVPOV}. Therefore, the entangled states that violate no Bell inequalities cannot be witnessed in a DI manner.

Rather than distrusting all the devices, Buscemi~\cite{Buscemi} proposed a type of semi-quantum games. In conventional Bell tests, one needs to randomly choose classical inputs to determine the measurement bases. While in a semi-quantum game, the random classical inputs are replaced by general random quantum inputs.
Conventional Bell tests can be seen as a special case with orthogonal quantum input states. It is proved that any entangled state can be witnessed in at least one such game. Since one needs to trust the quantum input states, this scenario enjoys a measurement-device-independent (MDI) nature, and for this reason, we call it semi device-independent (semi DI). More generally, we define semi DI scenario such that it also includes cases where measurement devices are trusted while inputs are not, called source independent~\cite{SI}.
{Inspired by Buscemi's semi-quantum game, practical MDI
entanglement witness~\cite{MDIEW,MDISimulation,MEIEWexperiment1,MEIEWexperiment2,RRMDIEW} and MDI entanglement quantification~\cite{MDIEQ} have been proposed, reaching a balance between practicality and device-independence based on current experiment condition.}
Although MDIQKD~\cite{MDIQKD,MDIQKDEX,MDIQKD200,MDIQKD404}
was independently proposed before Buscemi's work, it can be unified
under semi DI framework as well.

{Most previous works about DI tests rely on nonlocal settings and focus on multi-partite quantumness, particularly entanglement. However, non-classicality can arise even in a single party quantum state. Quantum coherence, which describes the superposition of states on a given computational basis, is the most basic non-classicality that a single party can hold. }
Under a resource framework~\cite{Aberg,Baumgratz,RMPcoherence,RevModPhys.91.025001},
quantum coherence has been identified as a key resource in many quantum information processing tasks such as cryptography~\cite{COQKD}, quantum random number generation~\cite{Xiao,COQRNG}, and quantum thermodynamics~\cite{thermo1,thermo2,thermo3,thermo4,thermo5}.
Coherence is also closely related to other types of non-classicality such as discord and entanglement~\cite{CohwithEnt,CohDisc,nonclassicality,EntAndCoh,Unification}.
{Different types of coherence measures have been proposed~\cite{RMPcoherence}, and the problem of coherence witness and quantification with given measurements are also well studied in literature~\cite{ROC,ROA,CWEX}. Though DI tests and coherence are two well-studied fields in quantum information theory, it is not known whether we can perform (semi) DI tests in single party coherence witness and quantification. If the answer to this question is positive, then we can conclude that DI test is a general tool in quantum information processing, rather than a concept strongly linked with non-locality. This surely would shed light on designing new (semi) DI quantum cryptographic protocols relying on a single party state.}

In our work, we systematically study single party coherence witness and quantification with untrusted devices, including both fully and semi DI scenarios.
Specifically, after a brief review of fully and semi DI tests of entanglement and necessary concepts of quantum coherence in Sec.~\ref{Sec:background}, the paper is organized as follows.

(i) In Sec.~\ref{nogo}, we first show that it is impossible to witness single party coherence via a naive generalization of the existing fully and semi DI scenarios, that is, witnessing single party coherence either fully device-independently, or via a ``half part'' of Buscemi's semi-quantum game.

(ii) In Sec.~\ref{newsemi}, we propose a new semi DI scenario in which we can witness and quantify the coherence of an unknown quantum state. Instead of measuring an ancillary state and the unknown state jointly in Buscemi's semi-quantum game, only one state is prepared and measured for each run. We assume that the unknown state lies in the subspace spanned by the ancillary states. In this prepare-and-measure scenario, coherence quantification can be expressed as an optimization problem.

(iii) In Sec.~\ref{num} and~\ref{Cl1} we consider two kinds of coherence measure, the relative entropy and $l_1$ norm of coherence~\cite{Baumgratz}. Via numerical and analytical approaches, we prove it is possible to witness and quantify coherence with our new scenario. Advantages and limitations of different approaches are discussed with some examples, and with the analytical approach we explicitly analyze the validity of our scenario.

In this paper, we mainly focus our discussion on the qubit (two-dimensional) case, while we also show that our results can be naturally generalized to the qudit (high-dimensional) case.


\section{preliminary}\label{Sec:background}
\subsection{Fully and semi device-independent tests of entanglement}
Before discussing the problem of witnessing and quantifying single party coherence, we first briefly review how entanglement can be witnessed via fully or semi DI scenarios, as shown in Fig.~\ref{Fig:BellTests}. 
We express the scenarios in the form of nonlocal games for convenience and accordance.
Every Bell inequality is equivalent to a nonlocal game (including Buscemi's semi-quantum nonlocal games and the generalized Bell-like inequalities), and one can refer to~\cite{RMPnonlocal} for a better understanding on this matter.

\begin{figure}[htp]
\centering \resizebox{8cm}{!}{\includegraphics{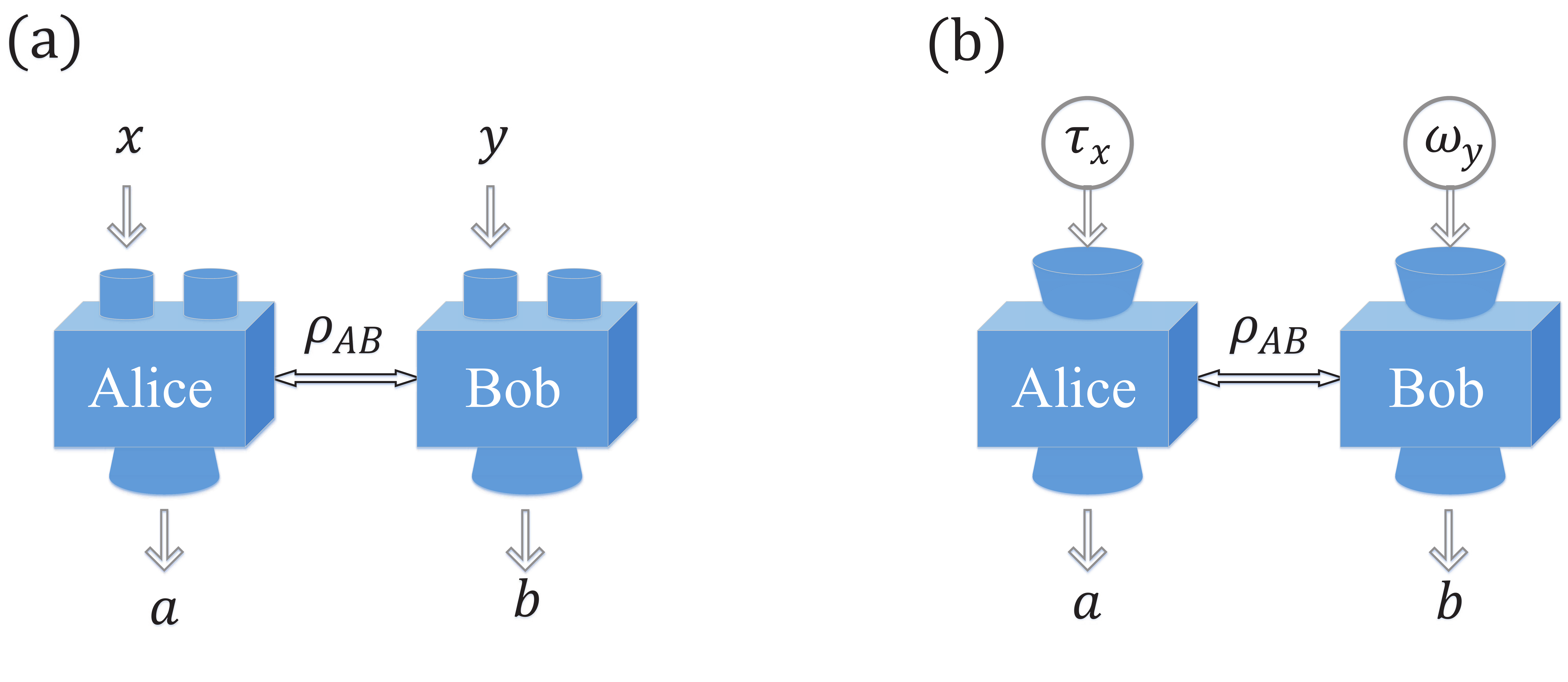}}
\caption{Witnessing entanglement via a fully DI and a semi DI scenario. (a) Fully DI scenario with classical inputs. Alice and Bob are given classical inputs, $x$ and $y$. Both of them perform a local measurement on their shared quantum state $\rho_{AB}$ and generate classical outputs, $a$ and $b$, respectively. The conditional probability distribution $p(a,b|x,y)$ is linked with their average score $S$. (b) Semi DI scenario with quantum inputs. The only difference in this scenario from the fully DI scenario is that quantum inputs, $\tau_x$ and $\omega_y$, rather than classical inputs, are given to Alice and Bob. The conditional probability $p(a,b|\tau_x,\omega_y)$ here means the probability of outputs $a$ and $b$, in the condition where inputs are $\tau_x$ and $\omega_y$. The probability distribution is then linked with their average score $T$.} \label{Fig:BellTests}
\end{figure}

In a bipartite Bell game, two players, say Alice and Bob, are given a few classical random inputs each, and are asked to generate some outputs independently. Spacelike separation is demanded such that once the game begins, no signal can be sent between the parties. In the quantum world, what Alice and Bob are capable of can be regarded as that they perform local measurements on a pre-shared quantum state. Then they are given a payoff depending on their outputs conditioned on the inputs {(the rule is known to both players)}. In a game with binary inputs and outputs, the average score Alice and Bob gain is
\begin{equation}\label{BellGame}
S=\sum_{a,b,x,y}\beta_{a,b}^{x,y}p(a,b|x,y),
\end{equation}
where $x$ and $y$ denote the inputs, $a$ and $b$ are the outputs generated by Alice and Bob, respectively. $\beta_{a,b}^{x,y}$ are scores corresponding to different circumstances. If we set $\beta$ equal to 1 or 0 in different circumstances with $\beta=1$ meaning that Alice and Bob win the game, then Eq.~\eqref{BellGame} is the winning probability. When possible inputs and outputs range from $\{0,1\}$ and Alice and Bob win iff $x \cdot y = a \oplus b$, Eq.~\eqref{BellGame} expresses a CHSH game~\cite{CHSH}. If $S>3/4$ is observed, we can conclude that Alice and Bob share an entangled state. Since no local hidden variable model can explain this phenomenon, we call such a state Bell nonlocal. In particular, if $S=\frac{2+\sqrt{2}}{4}$, we can imply that Alice and Bob share an EPR pair, in the sense of isometry.

As we have mentioned, Bell tests suffer from both theoretical and practical problems witnessing entanglement. {There is a gap between entanglement and Bell non-locality, hence there exist entangled quantum states that do not violate any Bell inequality \cite{RMPnonlocal}. A faithful violation of Bell inequalities requires that all experimental loopholes must be closed, yet loss and low efficiency of measurements can easily lead to a failure.} In his seminal work, Buscemi slightly modifies the conventional Bell nonlocal games~\cite{Buscemi}. The so-called semi-quantum game is all the same as a Bell game, except that general quantum inputs are allowed. We can write Bell-like inequalities
\begin{equation}\label{BuscemiGame}
T=\sum_{a,b,x,y}\beta_{a,b}^{x,y}p(a,b|\tau_x,\omega_y) \geq T_c,
\end{equation}
where $p(a,b|\tau_x,\omega_y)$ here represents the probability of outputs $(a,b)$ conditioned on quantum inputs p$(\tau_x,\omega_y)$, and $T_c$ is the maximum value Alice and Bob can achieve in a certain game with a separable state. It is proved that all entangled states can outperform separable states in at least one such semi-quantum game. This makes it possible to witness any entangled state via a semi DI approach. {Besides, it is now practical to prepare high-fidelity states, and with appropriate design, we can carry out loss-tolerant quantum information processing tasks based on this scenario, such as measurement-device-independent entanglement witness~\cite{MDIEW}.}

\subsection{Quantum coherence}
In this section, we review the resource theory of quantum coherence.
Consider a $d$-dimensional Hilbert space $\mathcal{H}_d$ and a computational basis $J = \{\ket{0},\ket{1},\dots,\ket{d-1}\}$, a state $\sigma$ is called incoherent if it only contains diagonal terms in its density matrix
\begin{equation}\label{inco}
  \sigma = \sum_{i=0}^{d-1}p_i\ket{i}\bra{i}.
\end{equation}
When a state $\rho$ cannot be written in this form, we call it coherent state and measure its coherence by adapting a proper coherence measure~\cite{Baumgratz}.

There are many functionals $C$ which can be used as coherence measures~\cite{RMPcoherence}. In this paper, we consider two distance-based quantifiers of coherence, the relative entropy of coherence and the $l_1$ norm of coherence
\begin{equation}\label{rel}
  C_{RE}(\rho) = S(\rho||\Delta(\rho)) = S(\Delta(\rho)) - S(\rho),
\end{equation}
\begin{equation}\label{l_1 norm}
  C_{l_1}(\rho) = \underset{i\neq j}\sum|\rho_{i,j}|,
\end{equation}
where $\Delta(\rho)=\sum_{i=j}\rho_{i,j}\ket{i}\bra{j}$, $S$ is the von Neumann entropy, and $S(\rho||\sigma)=\Tr[\rho \log_2\rho]-\Tr[\rho \log_2\sigma]$. The relative entropy of coherence has a clear physical interpretation, which is the distance between a state $\rho$ and the set of incoherent states. {This measure is related with intrinsic randomness against quantum adversary~\cite{yuan2016interplay,Liu2018quantum}, and quantifies the asymptotically distillable coherence under incoherent operations \cite{winter16}.} The $l_1$ norm coherence quantifier relates to the off-diagonal elements of the considered quantum state and is a widely used quantifier that intuitively shows the physical interpretation of coherence.

\section{Device-independent tests of coherence:  no-go for existing scenarios}\label{nogo}
Entanglement is a special form of coherence that only exists between at least two parties. Focusing only on a single party, it is tempting to ask whether there are counterparts of the (semi) DI scenarios in the problem of witnessing and quantifying single party coherence.
Different from (semi) DI tests of entanglement, it is a single party problem now, and only one untrusted device is involved essentially, as shown in Fig.~\ref{Fig:NOGO}. Surprisingly, neither of the above two methods can be directly generalised for witnessing coherence. We'll show that in either case, we can always find an incoherent state and some measurement to reconstruct a given probability distribution.

\begin{figure}[htp]
\centering \resizebox{8cm}{!}{\includegraphics{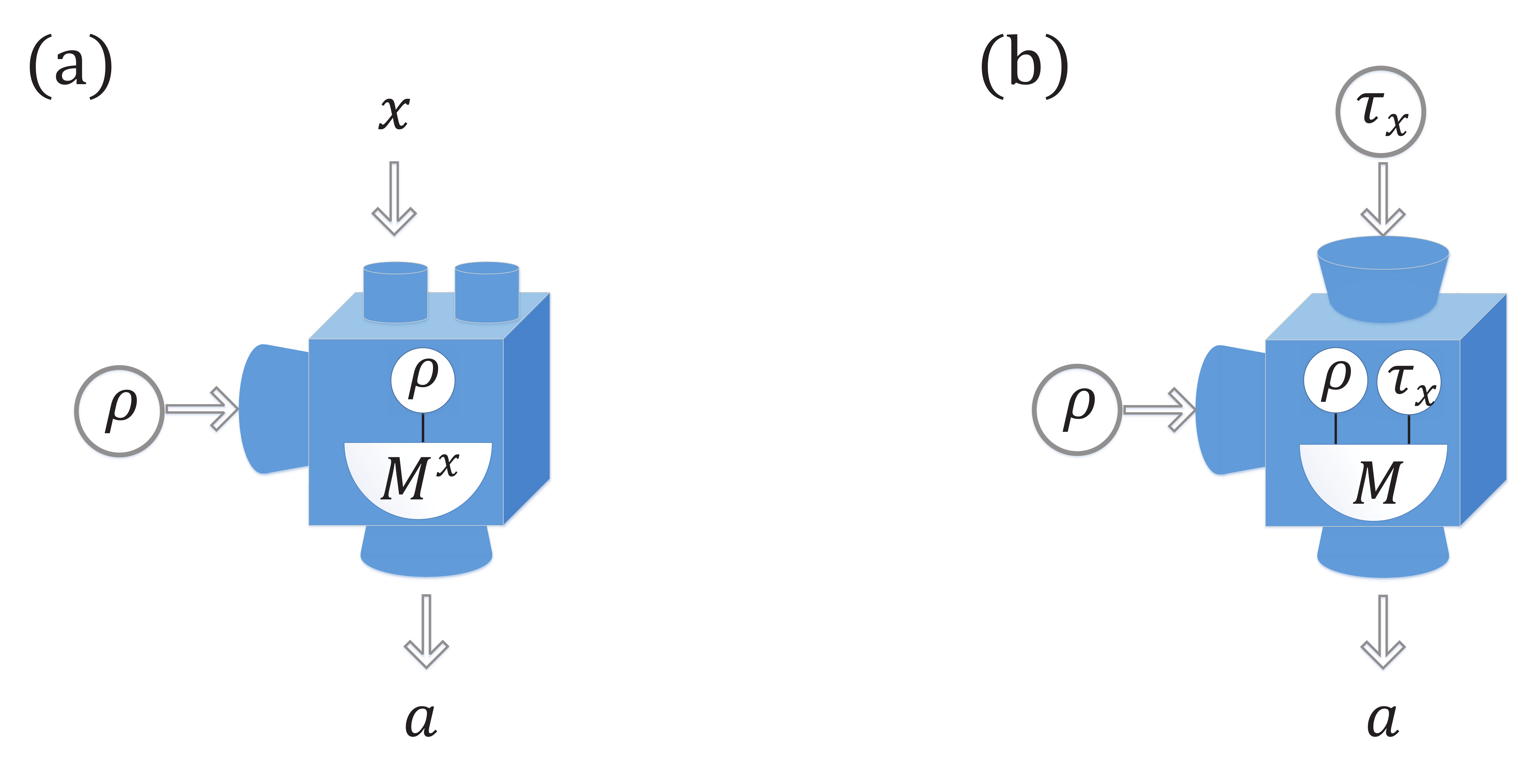}}
\caption{(Infeasible) DI coherence witness without and with ancillary states. (a) Fully DI scenario without ancillary states. The classical input $x$ determines a POVM $M^x$ (no characteristic except its label is known) for the measurement on the unknown quantum state $\rho$. The output $a$ and its probability distribution $p(a|x)$ are all that we are accessible to. (b) Semi DI scenario with ancillary states. In this scenario, apart from the unknown state $\rho$, some known (trusted) ancillary states $\tau_x$ are sent to the device as well. We describe the measurement process as a joint measurement on $\rho$ and $\tau_x$ (which is independent of label $x$), and the only information we have is the probability distribution $p(a|x)$, where $x$ denotes the label of the ancillary state $\tau_x$.} \label{Fig:NOGO}
\end{figure}

\subsection{Fully device-independent test}
First, we consider a fully DI scenario as shown in Fig.~\ref{Fig:NOGO}(a). Here, the untrusted device has an input $x$ and an output $a$, which gives a probability distribution $p(a|x)$. More than one POVM is possible, and the input $x$ determines by which POVM, $M^x$, the unknown quantum state is measured. We prove that this device cannot be used to witness coherence solely based on the probability distribution $p(a|x)$. For an unknown state $\rho$, the probability is given by
\begin{equation}\label{}
  p(a|x) = \Tr[\rho M^{x}_a],
\end{equation}
where $M^{x}_a$ is the element of POVM $M^x$ yielding the result $a$.

The probability distribution given by an incoherent state $\sigma = \sum_{i}p_i\ket{i}\bra{i}$ measured by $N^{x}=\{N^{x}_a\}$ is
\begin{equation}\label{}
    p_{\sigma}(a|x) = \sum_{i}p_i\Tr[\ket{i}\bra{i} N^{x}_a].
\end{equation}
Then we want to prove that any probability distribution $p(a|x)$ can be recovered by measuring incoherent states. Suppose the incoherent state is $\sigma = \ket{0}\bra{0}$ and the measurement operator is $N^{x}_a = p(a|x)\mathbb{I}$. It is easy to verify that $\{N^{x}_a\}$ form a POVM $N^{x}$ and $p(a|x) = p_{\sigma}(a|x)$.

A more careful thought on the definition of coherence displays the infeasibility of this scenario as well. Different from the problems about entanglement, we always need to appoint a certain computational basis when referring to quantum coherence. Yet one major characteristic of a fully DI scenario is its lack of a reference. It is therefore quite problematic to witness coherence under a certain computational basis via a fully DI method.

\subsection{Semi device-independent test: a joint-measurement scenario}
Now, we consider the case where the classical inputs are replaced by quantum inputs as shown in Fig.~\ref{Fig:NOGO}(b). That is, instead of inputting $x$, we input a quantum state $\tau_x$. Then the probability distribution is given by
\begin{equation}\label{}
  p(a|x) = \Tr[(\rho\otimes\tau_x) M_a],
\end{equation}
where $M_a$ is a POVM element that acts on $\rho$ and $\tau_x$, yielding the result $a$. The fully DI scenario in  Fig.~\ref{Fig:NOGO}(a) is a special case of  the scenario in Fig.~\ref{Fig:NOGO}(b), since letting $\tau_x = \ket{x}\bra{x}$ we will have the fully DI case. The extra advantage with ancillary states is to exploit the feature of imperfect distinguishability of non-orthogonal states. However, we will prove that the semi DI scenario with ancillary states cannot witness coherence either.

The probability distribution given by an incoherent state $\sigma = \sum_{i}p_i\ket{i}\bra{i}$ measured by $N = \{N_a\}$ is
\begin{equation}\label{}
    p_{\sigma}(a|x) = \sum_{i}p_i\Tr[(\ket{i}\bra{i}\otimes\tau_x) N_a],
\end{equation}
where $N_a$ is a POVM element that acts on $\sigma$ and $\tau_x$. Then, we can also show that the probability distribution with the incoherent state can recover all probability distributions. Given the spectral decomposition $\rho = \sum_{i}\lambda_i\ket{\psi_i}\bra{\psi_i}$, we can find an incoherent state $\sigma = \sum_{i}\lambda_i\ket{i}\bra{i}$ and POVM measurement $N_a =\sum_{j} (\ket{j}\bra{\psi_j}\otimes \mathbb{I})M_a(\ket{\psi_j}\bra{j}\otimes \mathbb{I})$, such that $p(a|x) = p_{\sigma}(a|x)$.

\section{Semi device-independent scenario with a prepare-and-measure set-up}\label{newsemi}
\subsection{Set-up of the scenario}
In the existing DI scenarios analyzed in Section~\ref{nogo}, we fail to gain information about the untrusted device via a joint measurement on $\rho$ and $\tau_x$ whether the inputs $\tau_x$ are seen as classical or quantum. Intuitively, with some trustworthy ancillary quantum states available, we can use them to obtain information about the measurement device first. This idea has been used for measurement tomography and randomness generation~\cite{CaoMDIQRNG,Tavakoli2018selftesting,PhysRevA.94.060301}, while we take one step further to see if we can witness and quantify the unknown state's coherence using a similar approach.

As shown in Fig.~\ref{Fig:NewSEMIDI}, instead of jointly measuring $\rho$ and $\tau_x$ in Fig.~\ref{Fig:NOGO}(b), we randomly input $\rho$ or $\tau_x$. That is, suppose the input set is $S = \{\rho_x|\rho_0=\rho,\,\rho_i=\tau_i,\,i = 1, 2, ... \}$, we randomly input $\rho_x\in S$.
In the i.i.d. limit we can treat the measurement as a fixed one. In addition, we assume that $\rho$ lies in the subspace spanned by the ancillary states, which will be explained in detail in the following.

\begin{figure}[htp]
\centering \resizebox{8cm}{!}{\includegraphics{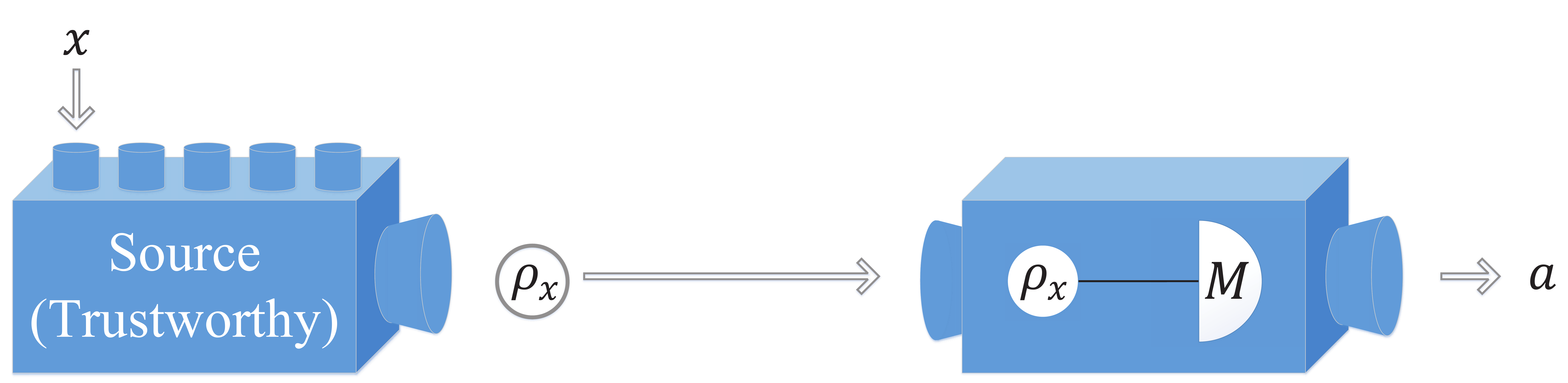}}
\caption{Coherence witness with ancillary states via a prepare-and-measure scenario. Based on a random classical input $x$, we send the unknown quantum state $\rho$ or one of the trusted ancillary states $\tau_x$ to the untrustworthy measurement device. Some (fixed) measurement on the quantum state is performed and a classical output $a$ is generated.} \label{Fig:NewSEMIDI}
\end{figure}

\subsection{Mathematical description of coherence quantification problem}
With our new semi DI scenario, we now ask the question of coherence quantification, which is to lower bound the coherence of an unknown state. This is a stronger problem than witnessing coherence. Here we consider the simplest scenario that there are only two outcomes, which are determined by a POVM that consists of two elements $M_1, M_2$. Since the two POVM elements should satisfy the completeness relation, we only need to take one element into consideration, say, $M_1$. In the following we denote $M_1$ as $M$ for convenience, and omit the subscript of its corresponding measurement result, unless specified otherwise. The coherence quantification problem is then stated as follows:

\emph{Problem1: In an appointed computational basis, given an unknown quantum state $\rho$ and an unknown POVM $\{M_,\,\mathbb{I}-M\}$, find}
\begin{align*}
&\min_{\rho}\quad C(\rho),\\
&\begin{array}{r@{\quad}r@{}l@{\quad}l}
\text{s.t. }&\Tr[\rho M]&=m,\\
            &\Tr[\tau_x M]&=n_x,\,x=1, 2, \ldots.\\
\end{array}
\end{align*}
$m,\,n_x$ are known statistics, $\tau_x$ are known, trusted ancillary states, and $\rho,\,M$ are the unknown quantum state and POVM element, respectively. $C$ is a certain functional which is a valid coherence measure.
In the following, we assume that coherence is defined in the computational basis $J=\{\ket{0},\ket{1},\ldots,\ket{d-1}\}$.

With ancillary states $\{\tau_x\}$ that form an informationally complete basis, we can carry out a full measurement tomography to determine all the POVM elements in the subspace spanned by these states. As $\rho$ also lies in this subspace, it is plausible to bound its coherence from the measurement statistics. If $\rho$ has components orthogonal to the subspace, as we know nothing about the POVM's behaviour on these components, we cannot decide whether $\rho$ is coherent or not from the statistics. This assumption is reasonable in many scenarios. For example, in a QKD experiment with a well-characterised photon source, a filter is used to project the photon into a qubit. We can reasonably assume that states after the filter lie in the same squashed space.

With known POVM elements, the original problem becomes an optimization problem with linear constraints:

\emph{Problem2: Given an unknown quantum state $\rho$ and a known POVM $\{M_,\,\mathbb{I}-M\}$, find}
\begin{align*}
&\min_{\rho}\quad C(\rho),\\
&\begin{array}{r@{\quad}r@{}l@{\quad}l}
\text{s.t. }&\Tr[\rho M]&=m.\\
\end{array}
\end{align*}
\emph{Problem2} is much easier than \emph{Problem1}, since only the quantum state $\rho$ is unknown, and the problem is a convex one. Yet some useful information still can be gained even if only a partial measurement tomography is made. In the following sections, we'll mainly focus on the case where a full measurement tomography is provided. Afterwards some discussions on \emph{Problem1} are made as well.

Before tackling the optimization problems we briefly show how a POVM tomography can be done. First we consider the simple qubit case. It is well known that a qubit can be expressed with Pauli matrices
\begin{equation}\label{qubit}
    \rho=\frac{\mathbb{I}+\vec{r}\cdot\vec{\sigma}}{2},
\end{equation}
where $\vec{r}$ is a three-dimensional real vector with its length $\|\vec{r}\|_2\in[0,1]$, $\mathbb{I}$ is the identity matrix, and $\vec{\sigma}=(\sigma_x,\sigma_y,\sigma_z)$ is the vector of Pauli matrices. Similarly, a two-dimensional POVM can be expressed in the form
\begin{equation}\label{POVM}
\begin{aligned}
& M_1=a(\mathbb{I}+\vec{\nu}\cdot\vec{\sigma}),\\
& M_2=\mathbb{I}-M_1,
\end{aligned}
\end{equation}
where $\vec\nu$ is a three-dimensional real vector~\cite{Kurotani07}. According to the definition of POVM, the parameters $a$ and $\vec\nu$ are such that $M_1,M_2 \succeq 0$, hence we have
\begin{equation}
\begin{aligned}
    0&<a<1,\\
    \|\vec\nu\|_2 &\leq \min \{1,\,\frac{1-a}{a}\}.
\end{aligned}
\end{equation}

If we input four ancillary states which form a complete basis, for instance, $\tau_x \in \{\ket{0},\ket{1},\ket{+}=1/\sqrt{2}(\ket{0}+\ket{1}),\ket{+i}=1/\sqrt{2}(\ket{0}+i\ket{1})\}$, we can do a full measurement tomography of the POVM. That is,
\begin{equation}\label{tomography}
\begin{aligned}
   &\left\{
        \begin{array}{lr}
    p(1|\ket{0})=a(1+\nu_z),\\
    p(1|\ket{1})=a(1-\nu_z),\\
    p(1|\ket{+})=a(1+\nu_x),\\
    p(1|\ket{+i})=a(1+\nu_y).
    \end{array}
          \right.
\end{aligned}
\end{equation}
From the mathematical perspective, it is straightforward to see that the POVM element $M_1$ can be exactly determined by solving the set of equations in Eq.~\eqref{tomography}, and $M_2$ can be determined from the completeness relation afterwards (in a real experiment, however, an employment of a maximum likelihood estimation method is preferred to directly solving Eq.~\eqref{tomography}, while this is not the main point of the coherence witness and quantification problem and is beyond the scope of this paper). It should be noted that the information we gained about the POVM is actually restricted to a subspace spanned by ancillary states. In this sense we will implicitly make statements like qudit condition and $d$-dimensional system, which actually refers to the dimension of the states and the subspace of the POVM investigated.

The discussion can be generalized to a $d$-dimensional system and more measurement outcomes. Notice the fact that any valid density matrices and POVM elements are Hermitian operators, and thus can be expressed as a linear combination of identity operator and the standard generators of SU(d) algebra~\cite{qudit-tomography}
\begin{equation}\label{qudit}
\rho=\frac{\mathbb{I}+\sum_{i=1}^{d^2-1} r_i \hat{\lambda}_i}{d},
\end{equation}
\begin{equation}\label{qudit POVM}
M_j= a_j\left(\mathbb{I}+\sum_{i=1}^{d^2-1} \nu_{j,i} \hat{\lambda}_i\right),
\end{equation}
where $d$ denotes the dimension of the Hermitian space and $\{\hat{\lambda}_i\}$ are the standard generators of SU(d) algebra. The construction of $\{\hat{\lambda}_i\}$ can be found in~\cite{qudit-tomography,qudit-Bloch-vectors}. We mainly use the notations in~\cite{qudit-tomography}, and we present a brief review on this in Appendix~\ref{generators}. In Eq.~\eqref{qudit}, the coefficients $r_i$ form a generalized Bloch vector $\vec{r}$ in $d$-dimensional Hilbert space with its length $\|\vec{r}\|_2\in[0,\sqrt{\frac{d(d-1)}{2}}]$. $\rho$ and $M_j$ are positive operators, and $\sum_j{M_j}=\mathbb{I}$. The measurement tomography can be carried out similarly to the qubit POVM, since we have
\begin{equation}\label{PR}
\mathrm{Tr}[\rho M_j]=a_j(1+\frac{2}{d}\vec{r}\cdot\vec{\nu}_j).
\end{equation}
In the following sections, we mainly focus on the qubit case with binary outcomes, and some characteristics specific to high dimensional cases are discussed afterwards.

\section{Numerical approaches to a lower bound for $C_{RE}$}\label{num}
First we take the relative entropy of coherence as the coherence measure.
{We choose this measure for its extensive use in quantum information processing. }
\emph{Problem 2} is then as follows:

\emph{Problem2(a): Given an unknown quantum state $\rho$ and a known POVM $\{M_,\,\mathbb{I}-M\}$, find}
\begin{align*}
&\min_{\rho}\quad C_{RE}(\rho),\\
&\begin{array}{r@{\quad}r@{}l@{\quad}l}
\text{s.t. }&\Tr[\rho M]&=m.\\
\end{array}
\end{align*}
We give two numerical methods for this problem:

\textbf{Method 1}: Convex optimization with linear constraints

$C_{RE}(\rho)$ is a convex function with respect to $\rho$ due to the joint convexity of the relative entropy, and all quantum states satisfying the constraint form a convex set, making \emph{Problem 2(a)} a convex optimization problem. In addition, the constraint we have here is linear. We can express the density matrices using Bloch vectors, and derive another optimization problem in the real vector space. Remarkably, equivalence between the representations of density matrices and Bloch vectors holds only in qubit case. In higher dimensional cases, not all matrices in the form of Eq.~\eqref{qudit} are valid quantum states, which we'll discuss in Section~\ref{Cl1} in detail.

In qubit case, the equivalent optimization problem in the space of $\mathbb{R}^3$ is as follows

\begin{align*}
&\min_{\vec{r}}\quad S(\rho(\vec{r})||\Delta(\rho(\vec{r}))),\\
&\begin{array}{r@{\quad}r@{}l@{\quad}l}
\text{s.t. }&\Tr[M]+\sum_{i}r_i\Tr[\sigma_i M]=2m,\\
\end{array}
\end{align*}
where $S(\rho||\Delta(\rho))$ is the relative entropy of coherence of $\rho$, and $\rho$ is related with $\vec{r}$ through the expression Eq.~\eqref{qubit}.
We can use some off-the-shelf numerical packages to solve this optimization problem with accuracy and high efficiency.

The problem turns out to be much more difficult if only a partial measurement tomography can be made, due to the quadratic form of the constraint $\Tr[\rho M]=m$. While inspired by the representation using Bloch vectors, we can at least employ a brutal-force numerical method. Noticing that when we express  the POVM element $M$ in the way of Eq.~\eqref{POVM}, we require $\|\vec\nu\|_2\leq \min \{1,\,\frac{1-a}{a}\}$. We can go over the region determined by the set of all possible POVMs with some appropriate sampling, and for each sampled point, we solve an optimization problem in the form of \emph{Problem 2(a)}. We can come to a result by comparing the optimal values at each sampled point. Cumbersome as it is, this method can give us the ``best'' result in theory, since we make no approximation apart from sampling (some approximation might be made in the algorithm employed by the numerical package, though). We can regard the result given by this method as a ``standard'' one.

~\\

\textbf{Method 2}: Optimization with Lagrange duality

Apart from the directly-solving method, we introduce an optimization method in~\cite{COQKD} based on Lagrange duality. The optimization satisfies the strong duality criterion and therefore we can consider its dual problem
\begin{equation}\label{dual}
\beta:=\max_{\lambda}\min_{\rho}\frac{1}{\ln 2} \mathcal{L} {(\rho, \lambda)},
\end{equation}
where the Lagrangian $\mathcal{L}$ is
\begin{equation}\label{Lagrangian}
    \mathcal{L} {(\rho, \lambda)}=S\left(\rho||\Delta(\rho)\right)+\lambda(\Tr[\rho M]-m),
\end{equation}
$\lambda$ is the introduced Lagrangian multiplier. $\beta$ is the optimal value of the dual problem, and strong duality implies that it is also the optimal value of the primal problem.

Using a property of $C_{RE}$
\begin{equation}
C_{RE}(\rho)=\min_{\delta\in\mathcal{I}}S(\rho||\delta)=S(\rho||\Delta(\rho)), \end{equation}
where $\mathcal{I}$ denotes the set of all incoherent states under the appointed computational basis, we can construct another function by introducing a new variant density matrix $\sigma$
\begin{equation}\label{}
    f(\rho,\sigma,\lambda):=S(\rho||\Delta(\sigma))+\lambda (\Tr[\rho M]-m),
\end{equation}
and re-express the dual problem in the form of a three-level optimization problem
\begin{equation}\label{eq:hatbeta}
    \beta=\frac{1}{\ln2}\max_\lambda\min_{\rho}\min_{\sigma}f(\rho,\sigma,\lambda).
\end{equation}
The two minimizations in Eq.~\eqref{eq:hatbeta} can be interchanged. We first solve $\min_{\rho}f(\rho,\sigma,\lambda)$, acquiring the unique solution and the optimal value
\begin{equation}\label{}
    \rho^*=\mathrm{exp}[-\mathbb{I}-\lambda M+\mathrm{ln}(\Delta(\sigma))],
\end{equation}
\begin{equation}\label{}
    f(\rho^*,\sigma,\lambda)=-\Tr[\rho^*]-\lambda m.
\end{equation}
We then employ Golden-Thompson inequality,
and obtain a lower bound on the optimal value
\begin{equation}\label{lowerbound}
    \beta\geq\frac{1}{\ln2}\max_{\lambda}\left(-\|\Delta(\exp(-\mathbb{I}-\lambda M))\|_{\infty}-\lambda m\right),
\end{equation}
where $\|\cdot\|_{\infty}$ denotes the maximum eigenvalue. In conclusion, we use Eq.~\eqref{lowerbound} as a lower bound for the coherence of the unknown state.

The advantage of this method is that the number of free parameters in the optimization is equal to the number of constraints and hence independent of the system's dimension. When the POVM contains only two elements as in our case, we only have one linear constraint if a full measurement tomography is made. Thus, this method will become more efficient when the dimension goes very large. On the other hand, however, the estimation result is not tight due to the use of Golden-Thompson inequality. Besides, as a duality approach is used, this method fails in the circumstance where only a partial measurement tomography is made.

Here we give some specific examples in the qubit case using our numerical methods in order to demonstrate some characteristics of the coherence witness and quantification problem.
Suppose after a full measurement tomography, we learn that for the POVM element $M=a(\mathbb{I}+\vec{\nu}\cdot\vec{\sigma})$, $a=0.6,\,\vec{\nu}=(1/2,\,1/4,\,1/4)$. Consider an unknown quantum state in the form $\rho=\frac{\mathbb{I}+q\vec{r}\cdot\vec{\sigma}}{2}$, where $q$ is an unknown parameter ranging in $[0,1]$, and $\vec{r}$ is some three-dimensional vector of which the length is $1$. We can interpret $q$ as a parameter denoting the state's purity. For $\vec{r_1}=(1,\,0,\,0)$ and $\vec{r_2}=(0,\,1,\,0)$, we compute the lower bound of the state's relative entropy of coherence when the value of $q$ changes from $0$ to $1$, as shown by Fig.~\ref{Fig:re1} and Fig.~\ref{Fig:re2}, respectively. Besides the results derived from two numerical methods, we also give the actual relative entropy of coherence for comparison, as shown by the blue solid lines.

\begin{figure}[htp]
\centering \resizebox{8cm}{!}{\includegraphics{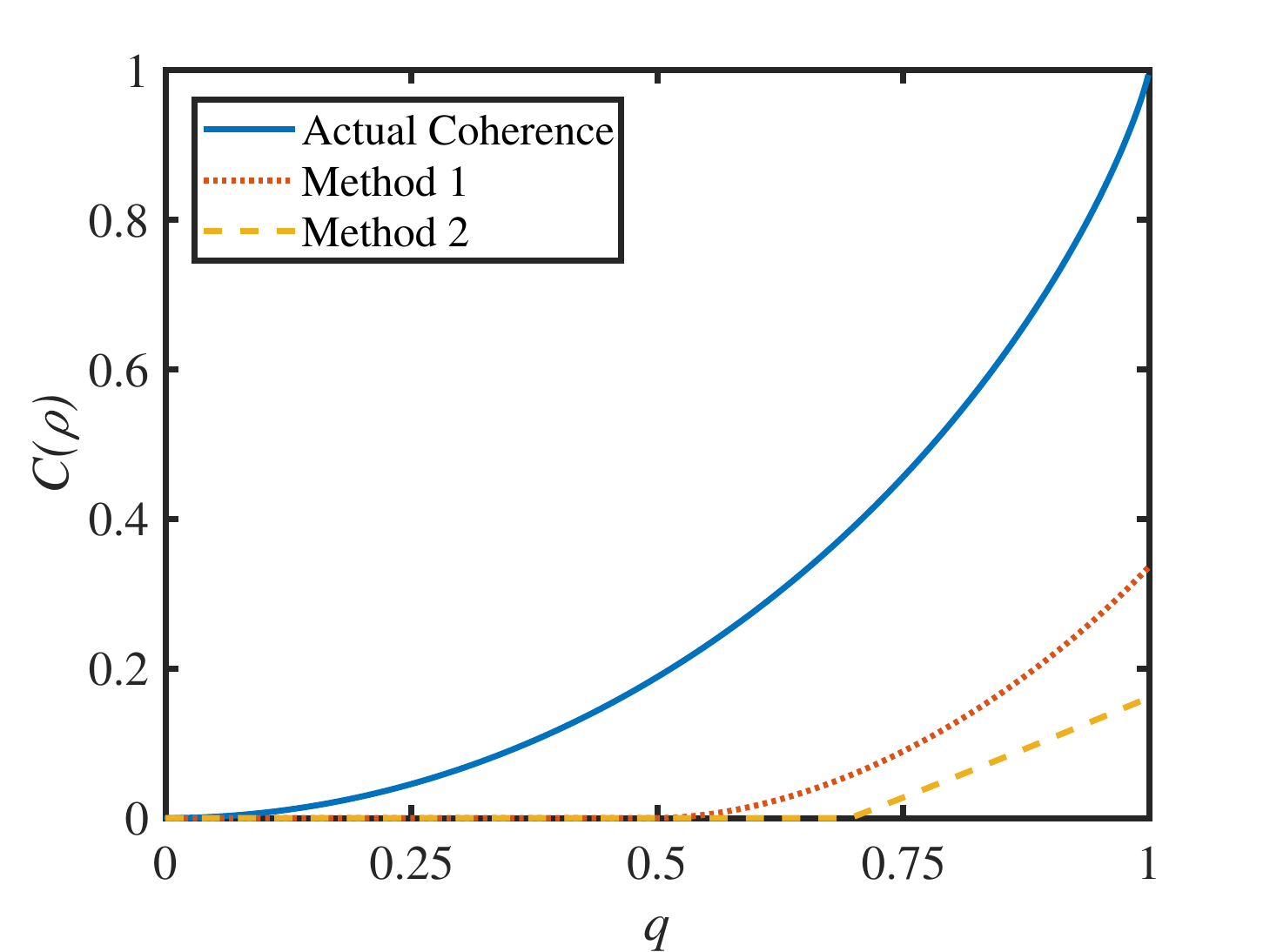}}
\caption{The estimation results via two numerical approaches and the actual value of $C_{RE}$ of the state $\rho=\frac{\mathbb{I}+q\vec{r_1}\cdot\vec{\sigma}}{2}$, with $q\in[0,1]$ and $\vec{r_1}=(1,\,0,\,0)$. The state is coherent under $z$-basis when $q>0$, yet the numerical estimation methods are able to give a non-zero lower bound on coherence only when $q$ is above some certain threshold: for Method 1 the threshold is 0.5, and for Method 2 the threshold is about 0.7. Generally, Method 1 yields a better result than Method 2.} \label{Fig:re1}
\end{figure}

\begin{figure}[htp]
\centering \resizebox{8cm}{!}{\includegraphics{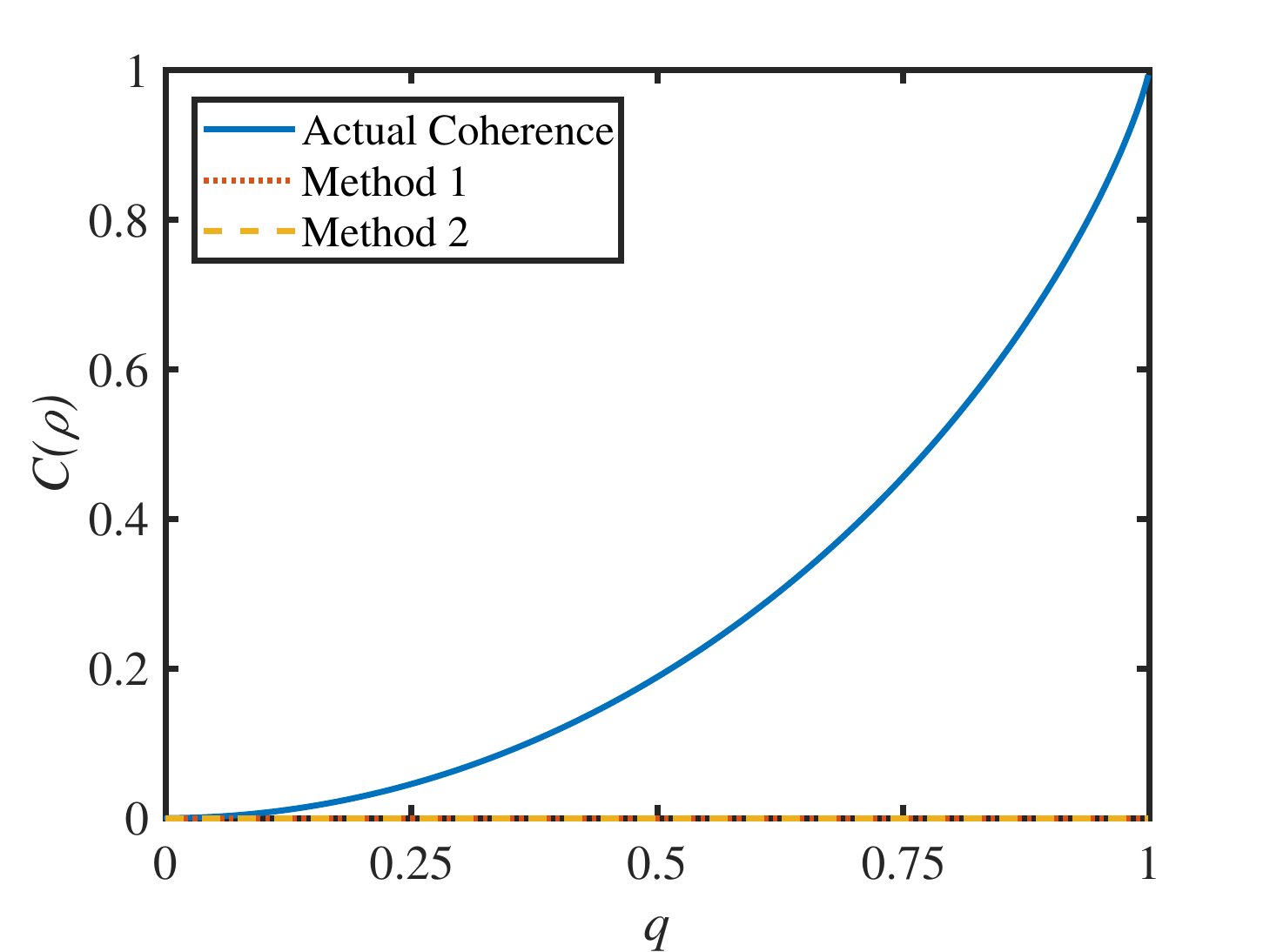}}
\caption{The estimation results via two numerical approaches and the actual value of $C_{RE}$ of the state $\rho=\frac{\mathbb{I}+q\vec{r_2}\cdot\vec{\sigma}}{2}$, with $q\in[0,1]$ and $\vec{r_2}=(0,\,1,\,0)$. The state is coherent under $z$-basis when $q>0$, yet neither of the two numerical estimation methods are able to give a non-zero lower bound on coherence, whatever the value $q$ takes.} \label{Fig:re2}
\end{figure}

From Fig.~\ref{Fig:re1}, we see that the two methods generally yield a valid (above zero) coherence quantification result, and the curve representing the first method is above the curve representing the second one, which accords with our analysis. Yet for the state $\rho=\frac{\mathbb{I}+q\vec{r_1}\cdot\vec{\sigma}}{2}$, we find that only when $q>0.5$ are we able to give a non-zero lower bound to its relative entropy of coherence, even with the first method. Method 2 requires an even larger threshold, where it gives a non-zero lower bound when $q$ is no less than about 0.7. Besides, under the computational basis $J=\{\ket{0},\,\ket{1}\}$, the states $\rho=\frac{\mathbb{I}+q\vec{r_1}\cdot\vec{\sigma}}{2}$ and $\rho=\frac{\mathbb{I}+q\vec{r_2}\cdot\vec{\sigma}}{2}$ hold the same non-zero $C_{RE}$ for identical $q$. Yet we see that whatever the value $q$ takes, we cannot validly bound the relative entropy of coherence of $\rho=\frac{\mathbb{I}+q\vec{r_2}\cdot\vec{\sigma}}{2}$.
In the next section where $l_1$ norm of coherence is used as the coherence measure, similar ``failures'' exist as well, while there we will analytically show that they are due to a special combination of the state and measurement.
%

\section{A tight analytical lower bound for $C_{l_1}$}\label{Cl1}
Although the numerical methods are easy to be implemented, they do not give us a clear picture with an intuitive physical interpretation. For instance, we cannot tell when we can achieve a result equal to the actual coherence a quantum state holds. In addition, it is not clear why we cannot estimate a state's coherence sometimes. For this reason, we hope to derive an analytical method for this problem.
As it is hard to derive an analytical lower bound for the relative entropy of coherence, we consider the $l_1$ norm for coherence for this task, which has a quite simple mathematical form.
The problem is as follows when we apply the $l_1$ norm of coherence as the coherence measure:

\emph{Problem2(b): Given an unknown quantum state $\rho$ and a known POVM $\{M_,\,\mathbb{I}-M\}$, find}
\begin{align*}
&\min_{\rho}\quad C_{l_1}(\rho),\\
&\begin{array}{r@{\quad}r@{}l@{\quad}l}
\text{s.t. }&\Tr[\rho M]&=m.\\
\end{array}
\end{align*}

As in the scenario when applying $C_{RE}$ as the coherence measure, we have a convex optimization problem which can be accurately and efficiently solved numerically using some off-the-shelf softwares. We take the result derived in this way as a ``standard'' result. Now we show how a tight analytical bound can be achieved. Still, we first consider the qubit case. Under $Z$-basis, the $l_1$ norm of coherence for a state $\rho=\frac{\mathbb{I}+\vec{r}\cdot\vec\sigma}{2}$ is given by~\cite{Xiao}
\begin{equation}\label{}
    C_{l_1}(\rho) = \sqrt{r_x^2+r_y^2}.
\end{equation}

The optimization problem with the qubit $\rho$ can be transferred into an equivalent problem on its corresponding Bloch vector $\vec{r}$. In qubit case, the following statements are equivalent:

\begin{align*}
&(\mathrm{S}1)\,\rho\succeq0,\,\mathrm{Tr}[\rho]=1,\\
&(\mathrm{S}2)\,\rho=\frac{\mathbb{I}+\vec{r}\cdot\vec{\sigma}}{2},\,\|\vec{r}\|_2\leq 1.
\end{align*}
For higher dimensions ($d \geq 3$), generalized (S2):\,$\rho=\frac{\mathbb{I}+\vec{r}\cdot\vec{\hat{\lambda}}}{d},\,\|\vec{r}\|_2\leq \sqrt{\frac{d(d-1)}{2}}$, which is derived from the requirement $\mathrm{Tr}[\rho^2]\leq 1$, is only a necessary condition for a valid quantum state. Here $\{\hat{\lambda}_i\}$ are standard generators of SU(d) algebra.
We can easily find counterexamples, e.g.
\begin{equation}
\rho=\left[
\begin{matrix}
\frac{1+\sqrt{3}}{3}&0&0&\\
0&\frac{1-\sqrt{3}}{3}&0&\\
0&0&\frac{1}{3}
\end{matrix}
\right].
\end{equation}
This matrix satisfies (S2), yet it is not positive, hence not a valid qutrit. This can cause some difficulties in the problem of coherence quantification in a general qudit case, as we will show later in this section.

Now let us return to the qubit case. 
The equivalent optimization problem is
\begin{align*}
&\min_{\vec{r}\in\mathbb{R}^3}\quad \sqrt{r_x^2+r_y^2},\\
&\begin{array}{r@{\quad}r@{}l@{\quad}l}
\text{s.t. }&a(1+\vec{\nu}\cdot\vec{r})&=m,\\
            &\|\vec{r}\|_2&\leq1.\\
\end{array}
\end{align*}
We now try to derive an analytical lower bound of coherence, beginning with the constraint given by the measurement result $\mathrm{Tr}[\rho M]=a(1+\vec{\nu}\cdot\vec{r})=m$.
For the target quantum state $\rho$, because $\|\vec{r}\|_2\leq1$, we have
\begin{equation}\label{scale1}
    \nu_z r_z\leq|\nu_z|\sqrt{1-(r_x^2+r_y^2)}.
\end{equation}
For the term $\nu_x r_x+\nu_y r_y$, we apply Cauchy-Schwarz inequality,
\begin{equation}\label{scale2}
\begin{aligned}
    \sqrt{(\nu_x^2+\nu_y^2)(r_x^2+r_y^2)}&\geq\sqrt{(|\nu_x r_x|+|\nu_y r_y|)^2}\\
    &=|\nu_x r_x|+|\nu_y r_y|\\
    &\geq \nu_x r_x+\nu_y r_y.
\end{aligned}
\end{equation}
Combining the inequalities Eq.~\eqref{scale1} and Eq.~\eqref{scale2} we come to the result
\begin{equation}\label{result1}
\begin{aligned}
&a\left(1+\sqrt{\nu_x^2+\nu_y^2}C_{l_1}(\rho)+|\nu_z|\sqrt{1-(C_{l_1}(\rho))^2}\right)\\
\geq&\,a(1+\vec\nu\cdot\vec{r})\\
=&\,m,
\end{aligned}
\end{equation}
where $C_{l_1}(\rho)=\sqrt{r_x^2+r_y^2}\in[0,1]$.
In the following, we assume that $\frac{m}{a}-1\geq0$. This assumption is reasonable, since there are two POVM elements, and according to the completeness relation we can always find one element that satisfies our assumption.

We take the smallest value satisfying this inequality, $C_{l_1}^{*}$, as the lower bound for coherence.
It is not obvious that we derive a lower bound from Eq.~\eqref{result1} for sure, as the inequality is quadratic essentially. Besides, from Eq.~\eqref{result1} it is not clear whether a non-zero bound for coherence can always be achieved as long as the POVM is a ``good'' one for coherence witness, that is, we cannot find an incoherent state to reconstruct the probability distribution. In addition, if a valid coherence bound can be achieved via our analytical approach, we naturally want to know whether it is tight. In other words, for any POVM with which we are able to witness coherence, can we always find a specific quantum state so that the lower bound of coherence equals to the actual coherence?

We prove that a lower bound can indeed be achieved in all circumstances with a ``good'' POVM from our analytical result, and our analytical approach is tight. We also discuss the physical meaning of the cases in which equality in Eq.~\eqref{result1} is achieved. Mathematically, we have the following conclusions:

\begin{theorem}\label{nonzerocondition}
  We cannot find an incoherent state to reconstruct the probability distribution derived from the measurement on a coherent state, if and only if $\nu_z^2< \left(\frac{m}{a}-1\right)^2$, and in this case we can always get a non-zero result by Eq.~\eqref{result1}. In other words, $C_{l_1}^{*}>0\Leftrightarrow\nu_z^2< \left(\frac{m}{a}-1\right)^2.$
\end{theorem}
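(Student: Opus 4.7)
\emph{Proof plan.} My plan is to split the theorem into the two logical halves of its ``iff'' and reduce both to a single observation about the image of the map $\sigma \mapsto \Tr[\sigma M]$ restricted to incoherent qubit states, plus a one-variable continuity argument for Eq.~\eqref{result1}.

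First I would handle the ``no incoherent reconstruction'' direction. Every incoherent qubit has a Bloch vector of the form $(0,0,r_z)$ with $r_z \in [-1,1]$, so as $\sigma$ varies over such states, $\Tr[\sigma M] = a(1 + \nu_z r_z)$ sweeps out exactly the closed interval $[a(1-|\nu_z|),\,a(1+|\nu_z|)]$. Hence the observed value $m$ can be reproduced by some incoherent state iff $|m/a - 1| \leq |\nu_z|$, which under the standing assumption $m/a - 1 \geq 0$ is the same as $\nu_z^2 \geq (m/a - 1)^2$. Negating gives the first half of the equivalence claimed in the theorem.

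Next I would turn to the claim about $C_{l_1}^{*}$. By definition, $C_{l_1}^{*}$ is the smallest $C \in [0,1]$ for which Eq.~\eqref{result1} holds, i.e.\ the smallest $C$ satisfying
\[
g(C) := \sqrt{\nu_x^2 + \nu_y^2}\,C + |\nu_z|\sqrt{1 - C^2} \;\geq\; \frac{m}{a} - 1.
\]
Evaluating at $C = 0$ gives $g(0) = |\nu_z|$. If $\nu_z^2 \geq (m/a-1)^2$, the inequality already holds at $C = 0$ and so $C_{l_1}^{*} = 0$; if instead $\nu_z^2 < (m/a-1)^2$, the inequality fails strictly at $C = 0$, and continuity of $g$ on $[0,1]$ produces a neighborhood of $0$ on which it keeps failing, forcing $C_{l_1}^{*} > 0$. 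Chaining this with the first equivalence yields the full statement.

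The main obstacle is essentially bookkeeping rather than anything deep. I would need to confirm that $C_{l_1}^{*}$ is actually well-defined, i.e.\ that some $C \in [0,1]$ satisfies $g(C) \geq m/a - 1$ at all; this is immediate from feasibility of the underlying problem, since the true state $\rho$ produces an admissible $C$ by the very derivation of Eq.~\eqref{result1}. I would also verify the boundary cases $\nu_z = 0$ and $m = a$ just to be safe. Beyond that, no further machinery is required: the entire argument rests on the Bloch-sphere parametrization of incoherent states together with an elementary evaluation-plus-continuity check of a single-variable function.
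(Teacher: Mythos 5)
Your proposal is correct, and its first half (when an incoherent state can reproduce $m$) is essentially the paper's own argument: incoherent qubits contribute only through $a(1+\nu_z s_z)$ with $|s_z|\le 1$, so reconstruction is possible iff $\nu_z^2\ge\left(\frac{m}{a}-1\right)^2$. Where you genuinely diverge is the second half. The paper (Appendix~\ref{proof1}) rearranges Eq.~\eqref{result1}, splits into two cases, squares, analyzes the discriminant (using Cauchy--Schwarz to rule out $\triangle<0$), and solves the quadratic explicitly, obtaining the closed-form expression for $C_{l_1}^{*}$ and then reading off positivity from the condition $\left(\frac{m}{a}-1\right)^2>\nu_z^2$. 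You instead observe that $C_{l_1}^{*}$ is the least $C\in[0,1]$ with $g(C)=\sqrt{\nu_x^2+\nu_y^2}\,C+|\nu_z|\sqrt{1-C^2}\ge \frac{m}{a}-1$, evaluate $g(0)=|\nu_z|$, and use continuity: if $|\nu_z|\ge \frac{m}{a}-1$ the bound is trivially zero, while if $|\nu_z|<\frac{m}{a}-1$ the inequality fails on a neighborhood of $C=0$, so $C_{l_1}^{*}>0$; your feasibility remark (the true state's $C_{l_1}$ satisfies Eq.~\eqref{result1} by construction) takes care of well-definedness, which in the paper is handled by the discriminant argument. Your route is shorter and more elementary, and it proves exactly what the theorem asserts; what it does not deliver is the explicit formula
\begin{equation*}
C_{l_1}^{*}=\frac{\left(\frac{m}{a}-1\right)\sqrt{\nu_x^2+\nu_y^2}-|\nu_z|\sqrt{\|\vec\nu\|_2^2-\left(\frac{m}{a}-1\right)^2}}{\|\vec\nu\|_2^2},
\end{equation*}
which the paper's case analysis yields as a by-product and which is what one actually quotes as the quantitative lower bound (and compares against the numerics and the tightness construction of Theorem~\ref{tight}). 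So your argument suffices for the stated equivalence, but the paper's heavier computation is doing extra work that the surrounding results rely on.
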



\begin{proof}
In the computation basis $J$, a coherent state $\rho=\frac{\mathbb{I}+\vec{r}\cdot\vec{\sigma}}{2}$ has the property that $r_x^2+r_y^2\neq0$, while an incoherent state $\delta=\frac{\mathbb{I}+\vec{s}\cdot\vec{\sigma}}{2}=\frac{\mathbb{I}+s_z \sigma_z}{2}$. If we can reconstruct the probability by $\delta$, we have
\begin{equation}\label{fail}
a(1+\vec{\nu}\cdot\vec{r})=a(1+\nu_z s_z)=m.
\end{equation}
When $m=a$ and $\nu_z=0$, Eq.~\eqref{fail} is satisfied and we cannot witness coherence. Apart from this special condition, Eq.~\eqref{fail} is satisfied when $s_z=\frac{m-a}{a \nu_z}$. Since $|s_z|\leq1$, we can find an incoherent state $\delta$ to reconstruct the probability when $\nu_z^2\geq\left(\frac{m}{a}-1\right)^2$. So we cannot find an incoherent state to reconstruct the probability distribution if and only if $\nu_z^2< \left(\frac{m}{a}-1\right)^2$.

The second part of the proof is a mathematical deduction to simplify Eq.~\eqref{result1}, which we leave in Appendix~\ref{proof1}. Here we just present the simplified result. As long as $\nu_z^2< \left(\frac{m}{a}-1\right)^2$, we can derive a lower bound for $l_1$ norm of coherence of the unknown state, which is
\begin{equation}
    C_{l_1}^{*} = \frac{\left(\frac{m}{a}-1\right)\sqrt{\nu_x^2+\nu_y^2}-|\nu_z|\sqrt{\|\vec\nu\|_2^2-\left(\frac{m}{a}-1\right)^2}}{\|\vec\nu\|_2^2}.
\end{equation}
\end{proof}


\begin{theorem}\label{tight}
  When $\nu_z^2 < \left(\frac{m}{a}-1\right)^2$, we can always find a coherent state $\rho$ such that $C_{l_1}(\rho)=C_{l_1}^{*}$. In other words, our analytical approach is tight.
\end{theorem}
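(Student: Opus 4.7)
The plan is to exhibit an explicit coherent qubit whose Bloch vector makes both inequalities used to derive $C_{l_1}^{*}$ (namely Eq.~\eqref{scale1} and Eq.~\eqref{scale2}) saturate simultaneously while still reproducing the observed statistic $m$; its $l_1$ coherence will then equal $C_{l_1}^{*}$ by construction.

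First, I would identify the saturation conditions for the two ingredient inequalities. Equality in the Cauchy--Schwarz step Eq.~\eqref{scale2} forces $(r_x, r_y)$ to be a nonnegative scalar multiple of $(\nu_x, \nu_y)$, so I parametrize $r_x = t\nu_x$, $r_y = t\nu_y$ with $t \geq 0$. Equality in Eq.~\eqref{scale1} requires the Bloch vector to be of unit length and $r_z$ to share the sign of $\nu_z$, so I set $r_z = \mathrm{sgn}(\nu_z)\sqrt{1 - t^{2}(\nu_x^{2}+\nu_y^{2})}$.

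Second, I fix $t$ so that the resulting coherence $C_{l_1}(\rho^{*}) = t\sqrt{\nu_x^{2}+\nu_y^{2}}$ equals $C_{l_1}^{*}$. Under the hypothesis $\nu_z^{2} < (m/a-1)^{2}$, one automatically has $\nu_x^{2}+\nu_y^{2}>0$ (otherwise $\|\vec\nu\|_2 = |\nu_z|$ and no quantum state could reproduce $m$), so the parametrization is well-defined. Substituting this Bloch vector into $a(1+\vec\nu\cdot\vec r)$ collapses exactly to the equation $a\bigl(1 + C_{l_1}^{*}\sqrt{\nu_x^{2}+\nu_y^{2}} + |\nu_z|\sqrt{1-(C_{l_1}^{*})^{2}}\bigr) = m$ that $C_{l_1}^{*}$ satisfies by definition, so the measurement constraint is met without further work.

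Finally, I would verify that $\vec r$ is a legitimate Bloch vector, i.e.\ $0 \leq C_{l_1}^{*} \leq 1$. Positivity is already guaranteed by Theorem~\ref{nonzerocondition}. The bound $C_{l_1}^{*} \leq 1$ is the only step requiring genuine algebra: after clearing denominators it becomes an inequality in $\|\vec\nu\|_2$ and $m/a-1$ which, using the physical feasibility $|m/a-1| \leq \|\vec\nu\|_2$ (immediate from the existence of any state producing the statistic $m$), reduces to the manifestly true $(A^{2}+B^{2})(k-A)^{2} \geq 0$ after squaring, where $A=\sqrt{\nu_x^{2}+\nu_y^{2}}$, $B=|\nu_z|$, $k=m/a-1$. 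This is the main obstacle, but it is modest. Once it is dispatched, the constructed $\rho^{*} = (\mathbb{I}+\vec r\cdot\vec\sigma)/2$ is a bona fide coherent qubit consistent with the observed statistic and with $C_{l_1}(\rho^{*}) = C_{l_1}^{*}$, which establishes tightness.
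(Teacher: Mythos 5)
Your construction is in essence the paper's own proof: the paper also exhibits a state saturating Eq.~\eqref{scale1} and Eq.~\eqref{scale2} (the system in Eq.~\eqref{tight problem}), solving the linear constraints for $\vec r$ and then imposing purity and the sign conditions, with $\nu_z\neq0$ and $\nu_z=0$ treated separately; your parametrization $r_x=t\nu_x$, $r_y=t\nu_y$, $r_z=\mathrm{sgn}(\nu_z)\sqrt{1-t^2(\nu_x^2+\nu_y^2)}$ with $t\sqrt{\nu_x^2+\nu_y^2}=C_{l_1}^{*}$ is the same state reached more directly. One small looseness: $C_{l_1}^{*}$ is \emph{defined} through an inequality, so ``satisfied by definition'' should be replaced by a one-line argument that equality actually holds at $C_{l_1}^{*}$ — e.g.\ the left-hand side of Eq.~\eqref{result1} is continuous in $C_{l_1}$ and, under $\nu_z^2<(\frac{m}{a}-1)^2$, lies strictly below $m$ at $C_{l_1}=0$, so the smallest admissible value meets the constraint with equality (equivalently, check that the smaller root in Eq.~\eqref{condition2} satisfies $\frac{m}{a}-1-\sqrt{\nu_x^2+\nu_y^2}\,C_{l_1}^{*}\geq 0$, so squaring introduced no spurious root).

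The genuine flaw is in your verification of $C_{l_1}^{*}\leq 1$, which you yourself single out as the main step: it does \emph{not} reduce to $(A^2+B^2)(k-A)^2\geq 0$. Writing $A=\sqrt{\nu_x^2+\nu_y^2}$, $B=|\nu_z|$, $k=\frac{m}{a}-1$, $S=A^2+B^2$, the inequality to prove is $kA-B\sqrt{S-k^2}\leq S$, i.e.\ $kA-S\leq B\sqrt{S-k^2}$, and if you square it (assuming $kA-S>0$) you need $(kA-S)^2\leq B^2(S-k^2)$; but a direct expansion gives $B^2(S-k^2)-(kA-S)^2=-S(k-A)^2\leq 0$, so the squared inequality demands $S(k-A)^2\leq 0$ and the squaring route collapses. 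The statement is rescued because the case $kA-S>0$ never occurs: feasibility gives $k\leq\|\vec\nu\|_2=\sqrt{S}$ and trivially $A\leq\sqrt{S}$, hence $kA\leq S$ and therefore $C_{l_1}^{*}\leq kA/S\leq 1$ with no squaring at all. With that one-line replacement (and the continuity remark above) your argument is complete and coincides with the paper's constructive proof.
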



\begin{proof}

To prove this, we go back to the inequalities used in our approach, Eq.~\eqref{scale1}~\eqref{scale2}:

1. The condition required by a valid qubit density matrix:
\begin{equation*}\label{scale1proof}
    \nu_z r_z\leq|\nu_z||r_z|\leq|\nu_z|\sqrt{1-(r_x^2+r_y^2)},
\end{equation*}

2. Cauchy-Schwarz inequality:
\begin{equation*}\label{scale2proof}
\begin{aligned}
    \sqrt{(\nu_x^2+\nu_y^2)(\nu_x^2+\nu_y^2)}&\geq\sqrt{(|\nu_x r_x|+|\nu_y r_y|)^2}\\
    &=|\nu_x r_x|+|\nu_y r_y|\\
    &\geq \nu_x r_x+\nu_y r_y.
\end{aligned}
\end{equation*}

The conditions for equality are rather intuitive from the physical perspective. To achieve equality in Eq.~\eqref{scale1}, we need:

\emph{(a) $\rho$ is a pure state, thus $\|\vec{r}\|_2=1$. In the Bloch sphere representation, the Bloch vector $\vec{r}$ reaches the surface of the Bloch Sphere.}

\emph{(b) $\nu_z r_z\geq0$, i.e. in the Bloch Sphere representation, $\vec{r}$ and $\vec{\nu}$ are in the same semi-sphere (they are both in the north or the south).}

To achieve equality in Eq.~\eqref{scale2}, the projections of $\vec{\nu}$ and $\vec{r}$ on the $XY$-section of the Bloch sphere point to the same direction, that is:

\emph{(a) $\nu_x r_x\geq0,\,\nu_y r_y\geq0$.}

\emph{(b) $(\nu_x,\,\nu_y)$ and $(r_x,\,r_y)$ are in the same or opposite direction.}

We find that as long as a valid estimation is feasible, there always exists a quantum state which satisfies these conditions and generates the required probability distribution.
The detailed proof of this theorem is in Appendix~\ref{proof2}.
\end{proof}

To demonstrate our analytical method and compare its result with the actual coherence and the result derived by the numerical method, we use the same examples as in Sec.~\ref{num}, with results shown in Fig.~\ref{Fig:l1a}, \ref{Fig:l1b}.
In the case shown by Fig.~\ref{Fig:l1b}, and when $q\leq 0.5$ in the case shown by Fig.~\ref{Fig:l1a}, neither of the methods can give a non-zero lower bound for the state's coherence, which is in accordance with Theorem~\ref{nonzerocondition}. When $q>0.5$ in Fig.~\ref{Fig:l1a} where we can derive a valid lower bound, we see the estimation result derived from our analytical method coincides with the one obtained from the numerical method, showing tightness in our analytical result.


\begin{figure}[htp]
\centering \resizebox{8cm}{!}{\includegraphics{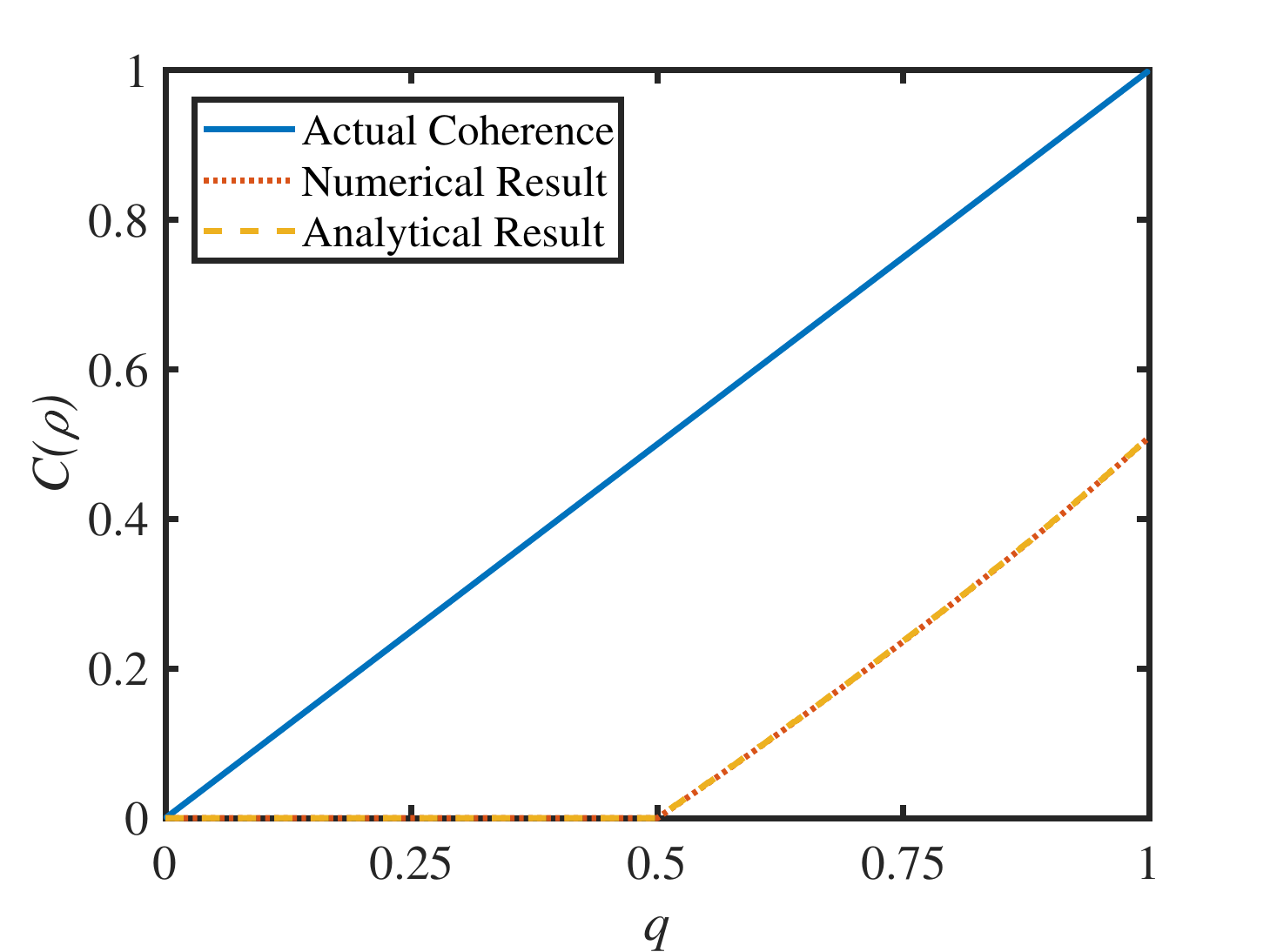}}
\caption{The estimation results via two numerical approaches and the actual value of $C_{l_1}$ of the state $\rho=\frac{\mathbb{I}+q\vec{r_1}\cdot\vec{\sigma}}{2}$, with $q\in[0,1]$ and $\vec{r_1}=(1,\,0,\,0)$. The analytical result coincides with the ``standard'' numerical result, giving a non-zero result when $q>0.5$.} \label{Fig:l1a}
\end{figure}

\begin{figure}[htp]
\centering \resizebox{8cm}{!}{\includegraphics{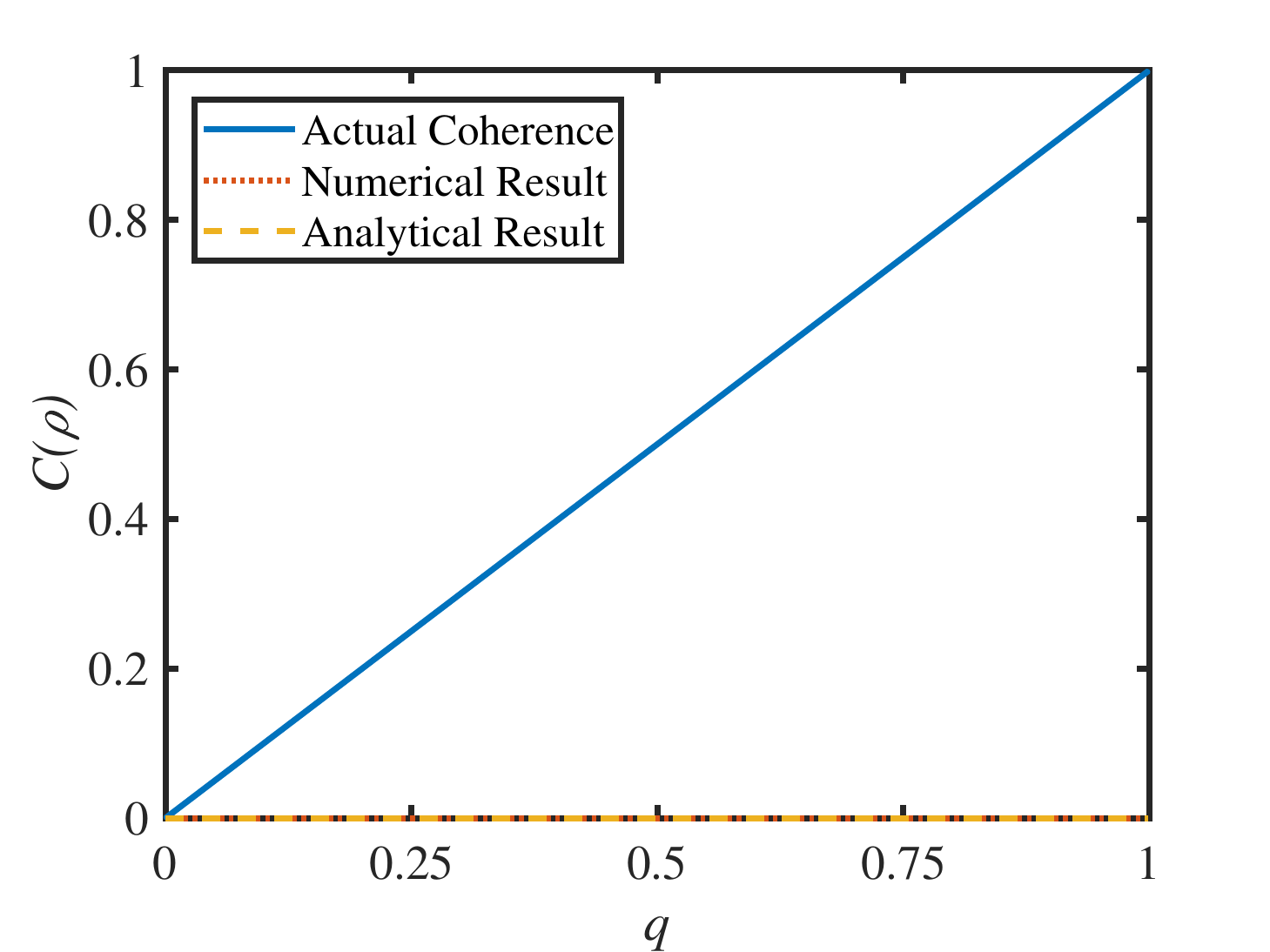}}
\caption{The estimation results via two numerical approaches and the actual value of $C_{l_1}$ of the state $\rho=\frac{\mathbb{I}+q\vec{r_2}\cdot\vec{\sigma}}{2}$, with $q\in[0,1]$ and $\vec{r_2}=(0,\,1,\,0)$. Neither of the methods can give a valid lower bound in this case.} \label{Fig:l1b}
\end{figure}

The analytical method can also be generalized to the high-dimensional case, where problems in higher dimensions follow a similar route.
Here we present a general result similar to Eq.~\eqref{result1}
\begin{widetext}
\begin{equation}\label{result3}
a\left(1+\mu\,C_{l_1}(\rho)+\frac{2}{d}\sqrt{\sum_{i=2}^{d}{\nu_{i^2-1}}}\sqrt{\frac{d(d-1)}{2}-\frac{d(C_{l_1}(\rho))^2}{2(d-1)}}\right)\geq a(1+\frac{2}{d}\vec\nu\cdot\vec{r})=m,
\end{equation}
\end{widetext}
where $\mu = \max_{1\leq j<i<d}\left\{\sqrt{\nu_{(i-1)^2+2(j-1)}^2+\nu_{(i-1)^2+2j-1}^2}\right\}$. In Appendix~\ref{qutrit} we show how to lower bound the $l_1$ coherence with a specific example of a qutrit. 
It's easy to prove that we indeed give a lower bound in Eq.~\eqref{result3} on a qudit's $l_1$ norm of coherence. This result, however, may suffer from the problem that the ``state" which takes the mark of equality is not a valid quantum state. This is due to the inequality scaling $\|\vec{r}\|_2\leq\sqrt{\frac{d(d-1)}{2}}$, where $\vec{r}$ is the generalized Bloch vector. Yet in high dimensional cases, as we've mentioned in this section previously, this is only a necessary condition for a valid quantum state. Thus we are not sure whether this approach is tight under high dimensional conditions. More knowledge about the algebra construction of a high dimensional quantum state is required.

In the condition of a partial measurement tomography, it's still possible to bound the $l_1$ norm of coherence of a quantum state in an inequality similar to Eq.~\eqref{result1} in form. Consider our qubit example, while this time only two ancillary states $\tau_1=\ket{0}\bra{0},\tau_2=\ket{1}\bra{1}$ corresponding to our appointed computational basis are available. $a,\,\nu_z$ in Eq.~\eqref{POVM} can be solved with these two ancillary states.
The positivity of POVM elements require that $\{M_1,M_2\}$ is a POVM, $M_1, M_2$ are both positive operators. From this constraint we have
\begin{equation}
\begin{aligned}
  &\left\{
        \begin{array}{lr}
    a(1-\|\vec{\nu}\|_2)\geq0,\\
    1-a(1+\|\vec{\nu}\|_2)\geq0.
    \end{array}
          \right.
\end{aligned}
\end{equation}
Solving the set of inequalities we have
\begin{equation}
    \nu_x^2+\nu_y^2\leq g(a)-\nu_z^2,
\end{equation}
where $g(a)=\min\{1,\frac{(1-a)^2}{a^2}\}$. Insert this inequality into Eq.~\eqref{result1}, $C_{l_1}(\rho)$ needs to satisfy
\begin{equation}
    a\left(1+\sqrt{g(a)-\nu_z^2}C_{l_1}(\rho)+|\nu_z|\sqrt{1-(C_{l_1}(\rho))^2}\right)\geq m.
\end{equation}

\section{Conclusion}
In this work, we introduce several scenarios of semi DI coherence witness and quantification with untrusted devices. We see that Bell tests or semi-quantum games cannot be applied to coherence witness in a straight-forward way, as Bell non-locality requires a stronger resource than single party quantum coherence.  In particular, contrary to the reference-independent feature of fully device-independent tests, coherence relies on the selection of the computational basis. It is thus impossible to witness and quantify single party coherence via a fully DI test. However, if we modify the way of using ancillary states in a semi-quantum game, that is, we first perform a measurement tomography via these trusted states, generally we can quantify an unknown state's coherence with untrusted measurement devices in a prepare-and-measure set-up. In this way we generalize the concept of (semi) DI tests to single party systems. In analyzing the feasibility of our new scenario, we consider the relative entropy of coherence and the $l_1$ norm of coherence as coherence measures. Thanks to the simple mathematical form of the $l_1$ norm of coherence, we find that we can quantify an unknown state's coherence with our new scenario, and we give a valid tight analytical method for estimation in the feasible range. This analytical approach can be naturally generalized to the high-dimensional case. As for the relative entropy of coherence, apart from a non-approximation numerical method, we borrow a Lagrangian-based duality numerical method, which was originally used in analyzing the key rate of quantum key distribution protocols. This method fails in the situation where only a partial measurement tomography is performed. While in the high-dimensional case with a full measurement tomography, this numerical approach becomes very efficient due to the use of Lagrangian duality. In all, our results can be seen as a further step on the discussion of prepare-and-measure approaches. Compared to previous works where the emphasis is on characterising unknown measurements~\cite{CaoMDIQRNG,Tavakoli2018selftesting,PhysRevA.94.060301}, we show the possibility on its application in witnessing quantum states.

Apart from the two distance-based coherence measures in this paper, there are many other coherence measures which may have clear operational interpretations in different situations, like the robustness of coherence \cite{ROC}, coherence of formation \cite{Xiao}, and the coherence measures related to one-shot dilution and distillation tasks \cite{zhao2018oneshot,Regula2018oneshot,zhao2018oneshotdistill}. We hope some explicit analytical results can be achieved for these measures in our new scenario. When estimating the relative entropy of coherence, we use the Golden-Thompson inequality. We notice that some improvements have been made on the original numerical estimation method in analyzing key rates of quantum key distribution protocols~\cite{Winick2018reliablenumerical}. We believe the improved method can be applied to the problem of coherence witness and quantification as well.
We hope our result may shed light on the close relation between coherence and randomness generation, and bring more possibilities in quantum cryptography.


\begin{acknowledgments}
We acknowledge P. Zeng, H. Zhou, Y. Cai and X. Ma for the insightful discussions. This work was supported by the National Natural Science Foundation of China Grants No. 11674193. XY is  supported by BP plc and by the EPSRC National Quantum Technology Hub in Networked Quantum Information Technology (EP/M013243/1).
\end{acknowledgments}

\appendix

\section{Construction of the Standard SU(d) Generators}\label{generators}
Here we briefly review the construction of the standard SU(d) generators $\hat{\lambda}_i$. First we introduce the elementary matrices of $d$ dimension, $\{e_i^j|i,j=1,\ldots,d\}$. $e_i^j$ denotes a matrix with its entry on the $j$th row, $i$th column equal to unity and all others equal to zero. With the help of these elementary matrices, we construct three types of traceless matrices:
\begin{equation}\label{realoff}
    \Theta_i^j = e_i^j + e_j^i,
\end{equation}
\begin{equation}\label{imagineoff}
    \beta_i^j = -i(e_i^j - e_j^i),
\end{equation}
\begin{equation}\label{diag}
    \eta_k^k = \sqrt{\frac{2}{k(k+1)}}\left[\sum_{i=1}^k e_i^i-ke_{k+1}^{k+1}\right],
\end{equation}
where in Eq.~\eqref{realoff}~\eqref{imagineoff} $1\leq j < i \leq d$, and in Eq.~\eqref{diag} $1 \leq k \leq d-1$. We can see that matrices $\Theta_i^j$ and $\beta_i^j$ are all Hermitian matrices with off-diagonal entries only, while matrices $\eta_k^k$ are diagonal real matrices. Then we define the SU(d) generators, or $\hat{\lambda}$ matrices,
\begin{equation}\label{type1}
    \hat{\lambda}_{(i-1)^2+2(j-1)} = \Theta_i^j,
\end{equation}
\begin{equation}\label{type2}
    \hat{\lambda}_{(i-1)^2+2j-1} = \beta_i^j,
\end{equation}
\begin{equation}\label{type3}
    \hat{\lambda}_{i^2-1} = \eta_{i-1}^{i-1}.
\end{equation}
We can see that Pauli matrices and Gell-Mann matrices are just the standard SU(d) generators in $2$- and $3$- dimension cases. $\hat{\lambda}_{(i-1)^2+2(j-1)}$ and $\hat{\lambda}_{(i-1)^2+2j-1}$ contribute to the off-diagonal terms, and $\hat{\lambda}_{i^2-1}$ contribute to the diagonal terms.

\section{Proofs of the Theorems}\label{proofs}
\subsection{Proof of Theorem 1}\label{proof1}
Now we modify Eq.~\eqref{result1} to this form
\begin{equation}\label{result1'}
    |\nu_z|\sqrt{1-(C_{l_1}(\rho))^2}\geq \left(\frac{m}{a}-1\right)-\sqrt{\nu_x^2+\nu_y^2}\,C_{l_1}(\rho).
\end{equation}

~\\

\emph{Case 1: $\frac{m}{a}-1\leq \sqrt{\nu_x^2+\nu_y^2}\,C_{l_1}(\rho)$}

In this case the inequality is established, and we derive a lower bound of coherence:
\begin{equation}\label{condition1}
    C_{l_1}^{*} = \frac{1}{\sqrt{\nu_x^2+\nu_y^2}}\left(\frac{m}{a}-1\right).
\end{equation}

~\\

\emph{Case 2: $\frac{m}{a}-1> \sqrt{\nu_x^2+\nu_y^2}\,C_{l_1}(\rho)$}

Squaring both sides of inequality Eq.~\eqref{result1'}, we get a quadratic inequality:
\begin{widetext}
\begin{equation}
    \|\vec{\nu}\|_2^2 C_{l_1}(\rho)^2-2\left(\frac{m}{a}-1\right)\sqrt{\nu_x^2+\nu_y^2}\,C_{l_1}(\rho)+\left[\left(\frac{m}{a}-1\right)^2-\nu_z^2\right]\leq 0.
\end{equation}
\end{widetext}
For this quadratic about $C_{l_1}(\rho)$, its discriminant is
\begin{equation}\label{discriminant1}
    \triangle=4\nu^2_z\left[\|\vec\nu\|_2^2-\left(\frac{m}{a}-1\right)^2\right]
\end{equation}
If $\triangle\geq0$, then
\begin{equation}\label{valid condition}
    \left(\frac{m}{a}-1\right)^2\leq \|\vec\nu\|_2^2,\,\text{or }\nu_z=0.
\end{equation}
From the measurement result, we see that $\vec{\nu}\cdot\vec{r}=\frac{m}{a}-1$, hence $(\vec{\nu}\cdot\vec{r})^2=\left(\frac{m}{a}-1\right)^2$. By applying Cauchy-Schwarz inequality, $(\vec{\nu}\cdot\vec{r})^2\leq \|\vec{\nu}\|_2^2 \|\vec{r}\|_2^2\leq \|\vec{\nu}\|_2^2$. Thus the case where $\triangle < 0$ is impossible.
We solve the quadratic inequality and derive the lower bound of coherence
\begin{equation}\label{condition2}
    C_{l_1}^{*} = \frac{\left(\frac{m}{a}-1\right)\sqrt{\nu_x^2+\nu_y^2}-|\nu_z|\sqrt{\|\vec\nu\|_2^2-\left(\frac{m}{a}-1\right)^2}}{\|\vec\nu\|_2^2}.
\end{equation}
The bound is valid when $\left(\frac{m}{a}-1\right)\sqrt{\nu_x^2+\nu_y^2} > |\nu_z|\sqrt{\|\vec\nu\|_2^2-\left(\frac{m}{a}-1\right)^2}$. If $\nu_z=0$, then we derive the same bound as given by Eq.~\eqref{condition1}. Except for this, simplifying this inequality, we have
\begin{equation}
    \left(\frac{m}{a}-1\right)^2 > \nu_z^2,
\end{equation}
which is the range within which we can possibly estimate a state's coherence, given by Theorem~\ref{nonzerocondition}.

By comparing the results in Case 1 and Case 2, we can see that the bound given by Eq.~\eqref{condition2} is no larger than Eq.~\eqref{condition1}. Therefore, the final estimation of coherence is Eq.~\eqref{condition2}.
And from the discussions above, we see that Eq.~\eqref{result1} gives us a lower bound of coherence indeed when a valid coherence bounding can be made. $\blacksquare$

~\\

\subsection{Proof of Theorem 2}\label{proof2}
Now we re-express the problem in the language of mathematics. We want to find a quantum state $\rho=\frac{\mathbb{I}+\vec{r}\cdot\vec\sigma}{2}$, subject to
\begin{equation}\label{tight problem}
\begin{aligned}
&\left\{
        \begin{array}{lr}
    a+a(\nu_x r_x+\nu_y r_y+\nu_z r_z)=m, \\
    \nu_x r_y=\nu_y r_x, \\
    r_x^2+r_y^2+r_z^2 = 1, \\
    \nu_x r_x,\,\nu_y r_y,\,\nu_z r_z \geq 0.
    \end{array}
          \right.
\end{aligned}
\end{equation}
$a,\,\vec{\nu}$ are already determined from previous measurement tomography. If such $\vec{r}$ can always be found when a valid coherence bound is made, we then prove our approach to be tight.

We notice the first two equations of Eq.~\eqref{tight problem} form a system of linear equations of $\vec{r}$
\begin{equation}\label{lin eq}
\begin{aligned}
&\left\{
        \begin{array}{lr}
\nu_x r_x+\nu_y r_y+\nu_z r_z =\frac{m}{a}-1, \\
\nu_y r_x-\nu_x r_y =0.
\end{array}
          \right.
\end{aligned}
\end{equation}
We can reasonably assume that $\nu_x^2+\nu_y^2 \neq 0$, since otherwise we can always find an incoherent state to recover measurement results, i.e. the POVM is a ``bad'' one.

~\\

\emph{Case 1: $\nu_z \neq 0$}

In this case, the general solution to Eq.~\eqref{lin eq} is
\begin{equation}
\begin{bmatrix}
  r_x \\
  r_y \\
  r_z \\
\end{bmatrix}
=
\lambda \begin{bmatrix}
          -\frac{\nu_x \nu_z}{\nu_x^2+\nu_y^2} \\
          -\frac{\nu_y \nu_z}{\nu_x^2+\nu_y^2} \\
          1 \\
        \end{bmatrix}
        +
        \begin{bmatrix}
          0 \\
          0 \\
          \frac{\frac{m}{a}-1}{\nu_z} \\
        \end{bmatrix}.
\end{equation}
$\lambda$ is an undetermined coefficient. In order that the last inequalities in Eq.~\eqref{tight problem} fit, we have
\begin{equation}\label{tight condition1}
\frac{m}{a}-1\geq -\lambda \nu_z \geq 0.
\end{equation}
and as we've already mentioned, $\frac{m}{a}-1 \geq 0$ always fits for one POVM element. To meet with the requirement of a pure state, we have
\begin{equation}
r_x^2+r_y^2+r_z^2=\lambda^2 \frac{\nu_z^2}{\nu_x^2+\nu_y^2}+\left(\lambda+\frac{\frac{m}{a}-1}{\nu_z}\right)^2=1.
\end{equation}
The discriminant of this quadratic about $\lambda$ is $\bigtriangleup=4[\frac{\|\vec{\nu}\|_2^2-\left(\frac{m}{a}-1\right)^2}{\nu_x^2+\nu_y^2}]$ ($\bigtriangleup$ here to be distinguished from that in Eq.~\eqref{discriminant1}), and by requiring $\bigtriangleup \geq 0$, we have
$\|\vec\nu\|_2\geq \frac{m}{a}-1$, which contains the range where we can validly estimate a state's coherence. Then we acquire the solution of $\lambda$
\begin{equation}
\lambda_{1,2}=\frac{-\frac{2(\frac{m}{a}-1)}{\nu_z}\pm\sqrt{\bigtriangleup}}{2\left(\frac{\|\vec{\nu}\|_2^2}{\nu_x^2+\nu_y^2}\right)}.
\end{equation}
It's easy to verify that in the range that a valid bound can be obtained, at least one of the solutions suffices Eq.~\eqref{tight condition1}.

~\\

\emph{Case 2: $\nu_z = 0$}

In this case, the solution to Eq.~\eqref{lin eq} is
\begin{equation}\label{tight condition2}
\begin{bmatrix}
  r_x \\
  r_y \\
\end{bmatrix}
=
\left(\frac{m}{a}-1\right)
\begin{bmatrix}
  \frac{\nu_x}{\nu_x^2+\nu_y^2} \\
  \frac{\nu_y}{\nu_x^2+\nu_y^2} \\
\end{bmatrix}.
\end{equation}
As for a valid quantum state, we require that $r_x^2+r_y^2 \leq 1$, hence for Eq.~\eqref{tight condition2}, we have $\left(\frac{m}{a}-1\right)^2 \leq \nu_x^2+\nu_y^2$, which is the same as Eq.~\eqref{valid condition}.
The solution suffices $\nu_x r_x,\,\nu_y r_y \geq 0$. We then acquire
\begin{equation}
r_z^2=\frac{\nu_x^2+\nu_y^2-\left(\frac{m}{a}-1\right)^2}{\nu_x^2+\nu_y^2}.
\end{equation}

In conclusion, we can always find a quantum state taking the mark of equality in Eq.~\eqref{result1}, as long as a valid coherence bounding can be made. Thus, we can say that our analytical approach is a tight one. $\blacksquare$

~\\

\section{An Analytical Result for a Qudit's $l_1$ Norm of Coherence (qutrit as an example)}\label{qutrit}
Using Gell-Mann matrices (the standard generators of SU(3) algebra), a qutrit can be represented as
\begin{equation}
\rho=\frac{\mathbb{I}+\sum_{i=1}^{8} r_i \hat{\lambda}_i}{3},\,\|\vec{r}\|_2\leq\sqrt{3}.
\end{equation}
The $\hat{\lambda}$ matrices are the same as in Appendix~\ref{generators}. To be more specific, the density matrix is
\begin{equation}
\begin{aligned}
 \rho = \frac{1}{3}
 \left[\begin{matrix}
    1+r_3+\frac{1}{\sqrt{3}}r_8 & r_1 - i r_2 & r_4 - i r_5 \\ r_1 + i r_2 & 1-r_3+\frac{1}{\sqrt{3}}r_8 & r_6 - i r_7 \\ r_4 + i r_5 & r_6 + i r_7 & 1 - \frac{2}{\sqrt{3}}r_8 \end{matrix}\right].
\end{aligned}
\end{equation}
By applying Eq.~\eqref{l_1 norm} we find the $l_1$ norm of coherence of a qutrit is
\begin{equation}
C_{l_1}(\rho)=\frac{2}{3}\left(\sqrt{r_1^2+r_2^2}+\sqrt{r_4^2+r_5^2}+\sqrt{r_6^2+r_7^2}\right).
\end{equation}
Suppose we have performed a full tomography on the qutrit POVM, that is, we have a full understanding about $M_j= a_j(\mathbb{I}+\sum_{i=1}^{8} \nu_{j,i} \hat{\lambda}_i)$. Still, we assume there are only two POVM elements, and we denote one element as $M$ and omit the subscripts of its coefficients. On the off-diagonal terms, we apply Cauchy-Schwarz inequality and obtain
\begin{equation}
\begin{aligned}
    &\nu_1 r_1+\nu_2 r_2+\nu_4 r_4+\nu_5 r_5+\nu_6 r_6+\nu_7 r_7\\
    \leq&\sqrt{\nu_1^2+\nu_2^2}\sqrt{r_1^2+r_2^2}+\sqrt{\nu_4^2+\nu_5^2}\sqrt{r_4^2+r_5^2}\\
    &+\sqrt{\nu_6^2+\nu_7^2}\sqrt{r_6^2+r_7^2}\\
    \leq&\,\mu(\sqrt{r_1^2+r_2^2}+\sqrt{r_4^2+r_5^2}+\sqrt{r_6^2+r_7^2})\\
    =&\,\frac{3}{2}\mu\,C_{l_1}(\rho),
\end{aligned}
\end{equation}
where $\mu=\max\{\sqrt{\nu_1^2+\nu_2^2},\sqrt{\nu_4^2+\nu_5^2},\sqrt{\nu_6^2+\nu_7^2}\}$. The diagonal terms get more in high dimensional conditions. We consider applying Cauchy-Schwarz inequality once more and obtain
\begin{equation}
\begin{aligned}
    &\nu_3 r_3+\nu_8 r_8\\
    \leq&\sqrt{\nu_3^2+\nu_8^2}\sqrt{r_3^2+r_8^2}\\
    \leq&\sqrt{\nu_3^2+\nu_8^2}\sqrt{3-(r_1^2+r_2^2+r_4^2+r_5^2+r_6^2+r_7^2)}\\
    \leq&\sqrt{\nu_3^2+\nu_8^2}\sqrt{3-\frac{(\sqrt{r_1^2+r_2^2}+\sqrt{r_4^2+r_5^2}+\sqrt{r_6^2+r_7^2})^2}{3}}\\
    =&\sqrt{\nu_3^2+\nu_8^2}\sqrt{3-\frac{3(C_{l_1}(\rho))^2}{4}}.
\end{aligned}
\end{equation}
And then we obtain an inequality similar to Eq.~\eqref{result1}
\begin{equation}\label{result2}
\begin{aligned}
&a\left(1+\mu\,C_{l_1}(\rho)+\frac{2}{3}\sqrt{\nu_3^2+\nu_8^2}\sqrt{3-\frac{3(C_{l_1}(\rho))^2}{4}}\right)\\
&\geq a(1+\frac{2}{3}\vec\nu\cdot\vec{r})=m.
\end{aligned}
\end{equation}

In a general $d$-dimensional case, the discussion can be naturally generalised, as
\begin{equation}
\begin{aligned}
    &\sum_{\substack{i=1\\i\neq j^2-1,2\leq j\leq d}}^{d^2-1}\nu_i r_i
    \leq\,\frac{d}{2}\mu\,C_{l_1}(\rho),\\
    &\mu = \max_{1\leq j<i<d}\left\{\sqrt{\nu_{(i-1)^2+2(j-1)}^2+\nu_{(i-1)^2+2j-1}^2}\right\},
\end{aligned}
\end{equation}
\begin{equation}
\begin{aligned}
    \sum_{\substack{i=1\\i= j^2-1,2\leq j\leq d}}^{d^2-1}\nu_i r_i
    \leq\sqrt{\sum_{k=2}^{d}{\nu_{k^2-1}}}\sqrt{\frac{d(d-1)}{2}-\frac{d(C_{l_1}(\rho))^2}{2(d-1)}}.
\end{aligned}
\end{equation}
Therefore we come to the result Eq.~\eqref{result3}.
\bibliographystyle{apsrev4-1}

\bibliography{bibCo}


\end{document}